\documentclass[onecolumn,notitlepage,nofootinbib,floatfix,superscriptaddress]{revtex4-2}
\usepackage{amsmath, amsfonts, amsthm, amssymb, bbm}
\usepackage[margin=1in]{geometry}
\usepackage{xspace}
\usepackage{algorithm,algpseudocode}
\usepackage{graphicx,xcolor,tikz}
\usepackage{hyperref}
\usepackage{physics}
\usepackage{qcircuit}

\newcommand{\bb}{\mathbb}
\newcommand{\mc}{\mathcal}

\DeclareMathOperator{\poly}{poly}
\newcommand{\cc}{\texttt{ColorCodeZ}\xspace}
\newcommand{\surfc}{\texttt{SurfaceCodeXYZ}\xspace}
\newcommand{\tcnot}{\texttt{tCNOTZ}\xspace}
\newcommand{\tdm}{\texttt{3DM}\xspace}
\newcommand{\tsat}{\texttt{3-SAT}\xspace}

\algblockdefx[ForEach]{ForEach}{EndForEach}[1]{\textbf{for each} #1 \textbf{do}}{\textbf{end for each}}

\newtheorem{theorem}{Theorem}
\newtheorem{lemma}[theorem]{Lemma}
\newtheorem{assumption}[theorem]{Assumption}

\begin{document}

\title{The color code, the surface code, and the transversal CNOT:\\
NP-hardness of minimum-weight decoding}

\author{Shouzhen Gu}
\affiliation{Yale Quantum Institute \& Department of Applied Physics, Yale University, New Haven, CT, USA}
\author{Lily Wang}
\affiliation{CSE Division, University of Michigan, Ann Arbor, MI, USA}
\author{Aleksander Kubica}
\affiliation{Yale Quantum Institute \& Department of Applied Physics, Yale University, New Haven, CT, USA}

\begin{abstract}
The decoding problem is a ubiquitous algorithmic task in fault-tolerant quantum computing, and solving it efficiently is essential for scalable quantum computing.
Here, we prove that minimum-weight decoding is NP-hard in three quintessential settings:
(i) the color code with Pauli $Z$ errors,
(ii) the surface code with Pauli $X$, $Y$ and $Z$ errors,
and (iii) the surface code with a transversal CNOT gate, Pauli $Z$ and measurement bit-flip errors.
Our results show that computational intractability already arises in basic and practically relevant decoding problems central to both quantum memories and logical circuit implementations, highlighting a sharp computational complexity separation between minimum-weight decoding and its approximate realizations.
\end{abstract}

\maketitle

\section{Introduction}

The decoding problem is one of the central algorithmic tasks in quantum error correction (QEC)~\cite{Shor1995, Steane1996} and fault-tolerant quantum computation~\cite{Shor1996, Aharonov1997,Knill1998, Preskill1998}.
It can be formulated as follows: given an error syndrome describing which errors may have occurred in the system, find a recovery operator that mitigates their detrimental effects on the encoded information.
In stabilizer codes~\cite{Gottesman1996}, the error syndrome is obtained by measuring parity checks, which form a set of commuting Pauli operators that define the code space.
Two canonical scenarios for the decoding problem are the code-capacity and phenomenological noise models\footnote{We remark that the circuit noise model introduces further details by specifying syndrome extraction circuits; however, for quantum low-density parity-check codes~\cite{Breuckmann2021}, it can be captured by the phenomenological noise model (possibly with small-weight correlated errors).
That is why we do not consider it as giving rise to a qualitatively different decoding scenario.
}, in both of which qubits are affected by Pauli noise.
In the former, parity-check measurements are assumed to be perfect, while in the latter parity-check outcomes may be corrupted, typically by independent bit-flip noise.
Thus, in the phenomenological setting, reliable QEC usually requires repeated rounds of syndrome extraction in order to gain confidence in the parity-check outcomes, unless the QEC protocol exhibits the single-shot property~\cite{Bombin2015,Campbell2019,Kubica2022,Gu2024} or exploits the paradigm of algorithmic fault tolerance~\cite{Zhou2025, cain2025fastcorrelateddecoding}.

Efficient decoding algorithms are essential for scalable quantum computation---a fault-tolerant quantum computer continuously generates syndrome data and the corresponding classical processing must keep pace; otherwise, one encounters the backlog problem, in which classical post-processing becomes the bottleneck~\cite{Terhal2015}.
A decoding algorithm takes as input the measured syndrome together with a decoding hypergraph, whose vertices and hyperedges correspond to, respectively, the syndrome and elementary errors.
In the code-capacity setting, these elementary errors are typically single-qubit Pauli errors, whereas in the phenomenological setting, they also include measurement bit-flip errors affecting parity-check outcomes.
The decoding algorithm may additionally use prior information about the underlying error distribution.

A particularly important class of decoding algorithms is minimum-weight decoders, which seek a recovery consistent with the measured syndrome that has minimum weight.
This paradigm is motivated by the standard assumption of independent and identically distributed noise, for which the most-likely error usually corresponds to the minimum-weight error.
Since stabilizer codes are degenerate, i.e., different errors can have the same syndrome, minimum-weight decoders are not generally the same as the maximum-likelihood decoder, which should instead identify the most likely equivalence class of errors.
Nevertheless, minimum-weight decoders remain among the most successful and widely used decoding algorithms.
In particular, minimum-weight decoders are regarded as the gold standard for the surface code~\cite{Dennis2002, Higgott2023, Wu2023}; they also constitute a core ingredient in competitive decoders for other topological quantum codes, including the color code~\cite{Delfosse2014,Chamberland2020,Sahay2022,Kubica2023CCrestrictiondecoder} and more general two-dimensional translationally-invariant codes~\cite{tan2026generalizedmatchingdecoders2d,sahay2026matchingdecoderbivariatebicycle}.

In this work, we prove that minimum-weight decoding is NP-hard in three fundamental QEC settings: (i) the color code, (ii) the surface code and (iii) the surface code with a transversal CNOT gate.
Concretely, for (i) and (ii) we consider the code-capacity setting with, respectively, Pauli $Z$ errors and Pauli $X$, $Y$ and $Z$ errors, while for (iii) we consider the phenomenological noise with Pauli $Z$ and measurement bit-flip errors.
We focus on either finding a minimum-weight error or interpreting its logical effect.
In each case, we establish NP-hardness via a reduction from the three-dimensional matching problem, which is NP-complete~\cite{Karp1972,GareyJohnson90NPcompleteness}; see Fig.~\ref{fig:reductionoverview} for an overview of the reduction.
Our results show that computational intractability already appears in some of the most basic and practically relevant decoding problems in QEC.
At the same time, they sharply distinguish minimum-weight decoding from its approximate realizations---in all three settings, there exist efficient decoders that output a recovery whose weight is within a constant factor (two or three) of the minimum-weight recovery; see Appendix~\ref{app_minwt} for details.
Finally, we note that Walters and Turner~\cite{walters2026CCNPhard} have very recently established NP-hardness for case (i) via a reduction from the \tsat  problem.
Our work goes beyond their setting, extending similar arguments to cases (ii) and (iii) for the prototypical and the most studied QEC code, the surface code.

\section{Preliminaries}
\label{sec:prelim}

We begin with some basic complexity theory definitions.
The complexity class NP consists of decision problems where the YES instances have proofs that can be verified in polynomial time. We say that a problem $A$ is NP-hard if there is a polynomial-time reduction from any problem in NP to $A$; that is, if we can solve $A$, then we can solve any NP problem with polynomial time overhead. A problem is NP-complete if it is NP-hard and in NP.

The 3-dimensional matching problem (\tdm) is among Karp's 1972 foundational list of 21 NP-complete problems \cite{Karp1972,GareyJohnson90NPcompleteness}. In an instance of \tdm, we are given three finite sets $A$, $B$, $C$ of equal size $r$ and a set of hyperedges $T \subseteq A \times B \times C$. A subset $M \subseteq T$ is a matching if for any distinct $(a_1, b_1, c_1), (a_2, b_2, c_2) \in M$, we have $a_1 \neq a_2$, $b_1 \neq b_2$, and $c_1 \neq c_2$. The decision problem is to determine whether there exists a perfect matching, i.e., a matching of size $r$ that covers all elements. The problem size is taken to be $s=|T|$.

\begin{quote}
\textbf{3-DIMENSIONAL MATCHING (\tdm)}\\
\textbf{Instance}: A size-$s$ set $T\subseteq A\times B\times C$, where $A$, $B$, and $C$ are sets of the same cardinality $r$.\\
\textbf{Question}: 
Does $T$ contain a perfect matching, i.e., a subset $M\subseteq T$ such that $|M|=r$ and no two elements of $M$ agree in any coordinate?
\end{quote}

We next provide some background on QEC codes and the decoding problem.
A stabilizer code on $n$ qubits is defined as the $+1$-eigenspace of an Abelian subgroup $\mathcal S$ of the $n$-qubit Pauli group $\mathcal{P}_n$, with $-I \notin \mathcal S$. Since the codespace is an eigenspace, any Pauli error which anticommutes with some check in $\mathcal S$ removes the state from the codespace and can be detected. In order to correct errors, we first measure the stabilizer checks of $\mathcal S$ to obtain the \emph{syndrome} of $\pm 1$ outcomes. The syndrome can then be used to predict the most likely error and recover the original code state.

In the minimum-weight decoding problem, we are given as input a binary matrix $H\in \bb F_2^{m\times n}$, a syndrome $\sigma\in \bb F_2^m$, and a weight function on vectors in $\bb F_2^n$.
The task is to find an error $e\in \bb F_2^n$ of lowest weight such that $He=\sigma$ (if it exists).
Here, we consider the weight function that gives the size of the error's support and take $n$ to be the size of the decoding problem.
The columns of $H$ are typically referred to as variables and the rows as constraints, with the rows corresponding to 0 or 1 entries in $\sigma$ as satisfied or unsatisfied, respectively.
When decoding QEC codes, $H$ is called either the parity-check matrix in the code-capacity setting or the detector check matrix~\cite{Higgott2025sparseblossom} in the phenomenological noise setting. We will consider the following three scenarios.
\begin{enumerate}
    \item \emph{Color code under $Z$ noise.} We consider the triangular color code~\cite{bombin2006} defined on a hexagonal lattice of linear size $L$; see Fig.~\ref{fig:specialwiregadget}(a)(b) for illustration. In the dual lattice description, qubits are placed on the triangles of the triangular lattice.
    $X$ and $Z$ checks are associated with the vertices, and they are supported on the qubits on the adjacent triangles.
    In the decoding problem, the variables correspond to whether each qubit has suffered a $Z$ error, and the $X$ checks are the constraints. Hence, $H$ is the incidence matrix between the triangles and vertices on the dual lattice. The weight of an error is the Hamming weight, i.e., the number of corrupted qubits.
    \begin{quote}
    \textbf{MIN-WEIGHT DECODING FOR THE COLOR CODE (\cc)}\\
    \textbf{Instance}: The $n$-qubit triangular color code on a hexagonal lattice and an $X$ syndrome $\sigma_X$.\\
    \textbf{Question}: Find a Pauli $Z$ error $E_Z$ of minimal weight with the syndrome $\sigma_X$.
    \end{quote}
    \item \emph{Surface code under Pauli noise.} We consider the surface code~\cite{Kitaev2003,Dennis2002} defined on a square lattice, where the qubits, $X$ checks, and $Z$ checks are placed on the edges, vertices, and faces of the lattice, respectively. The support of each $X$ or $Z$ check is the set of qubits on the edges incident to the vertex or face of the check.
    In the decoding problem, there are two variables for each qubit: the first determining if it has suffered an $X$ error and the second determining if it has suffered a $Z$ error (with both being 1 if it has suffered a $Y$ error). The constraints are the $X$ and $Z$ checks. Then, $H$ is a block diagonal matrix consisting of the edge-vertex and edge-face incidence matrices.
    However, we cannot decode its two blocks independently, as the weight function returns the error's support.
    \begin{quote}
    \textbf{MIN-WEIGHT DECODING FOR THE SURFACE CODE (\surfc)}\\
    \textbf{Instance}: The $n/2$-qubit surface code on a square lattice and a syndrome $\sigma$.\\
    \textbf{Question}: Find a Pauli error $E$ of minimal weight with the syndrome $\sigma$.
    \end{quote}
    \item \emph{Transversal CNOT in the surface code with the phenomenological noise.} In this setting~\cite{Beverland21costofuniversality,Sahay25tCNOT,wan2025iterativetransversalcnotdecoder,Cain2024}, there are two square-lattice surface code patches, and we perform repeated stabilizer measurements at half-integer times $t=t_i-\frac{1}{2}, \dots, t_{G}-\frac{1}{2}, t_{G}+\frac{1}{2}, \dots, t_f+\frac{1}{2}$. At $t=t_{G}$, a transversal logical CNOT gate is applied from the first surface code to the second. The constraints are the $X$ \emph{detectors}, which are products of consecutive measurement outcomes $s^a_{t-1/2}s^a_{t+1/2}$ of the same $X$ stabilizer $s$ on the $a$-th surface code at times $t-\frac{1}{2}$ and $t+\frac{1}{2}$. However, the detectors $s^1_{t_{G}-1/2}s^1_{t_G+1/2}$ are replaced by $s^1_{t_{G}-1/2}s^1_{t_G+1/2}s^2_{t_G-1/2}$ due to the transversal CNOT gate.
    There is one variable for each qubit at each integer time step from $t_i$ to $t_f$ corresponding to whether that qubit suffered a $Z$ error at that time. For concreteness, we assume that at $t=t_{G}$ the error happens after the CNOT gate is applied.
    Additionally, there is a variable for each measurement outcome from $t_i+\frac{1}{2}$ to $t_f-\frac{1}{2}$ corresponding to an incorrect measurement. 
    For simplicity, we assume that the first and last measurement outcomes are ideal.
    Then, $H$ comprises the edge-vertex incidence matrices of two 3D cubic lattices, with the edges of the second lattice at time $t=t_G-\frac{1}{2}$ replaced by hyperedges that also connect to the vertices at the same spatial locations of the first lattice at time $t_G$.
    The weight of an error is the Hamming weight.
    \begin{quote}
    \textbf{MIN-WEIGHT DECODING FOR THE TRANSVERSAL CNOT GATE (\tcnot)}\\
    \textbf{Instance}: Two surface codes on $L\times L$ square lattices with $2L$ measurement rounds and a transversal logical CNOT gate applied after $L$ rounds, a set of triggered $X$ detectors $\sigma$.\\
    \textbf{Question}: Find a minimum-weight error $E$, consisting of $Z$ qubit errors and measurement bit-flip errors, with the syndrome $\sigma$.
    \end{quote}
\end{enumerate}

In the setting of QEC codes, we refer to unsatisfied constraints as \emph{defects}. To make the connection with \tdm, we label defects with three types $A$, $B$, $C$ in each of the three decoding scenarios, such that the number of defects of each type has the same parity for any error.
For the color code, the vertices of the triangular lattice can be colored red, green, and blue, such that no two adjacent vertices have the same color. This means a single error causes one defect of each color, which we associate with the three defect types.
In the surface code, we consider a violated $Z$ ($X$) parity check as an $A$ ($C$) defect, also called an $X$ ($Z$) defect.
Note that the defect type is labeled by the type of Pauli error that causes it, not the type of stabilizer that is violated. If a $Z$ defect is on a vertex that is adjacent to a face containing an $X$ defect, we combine the two defects and refer to them as a $B$ or $Y$ defect. This operation maintains that the parity of the number of defects of each type is the same.
In the surface code with a transversal CNOT, we label all defects on the first surface code as $B$ type, irrespective of their time coordinate.
All defects on the second surface code at times $t\ge t_G$ are labeled $C$, and those at times $t\le t_G-1$ are labeled $A$. A weight-one error creates two defects of the same type, except if it is a measurement error on the second surface code at time $t_G-\frac{1}{2}$, which creates one defect of each type. Thus, the number of defects of each type has the same parity.

In a decoding problem, we can create a \emph{Tanner graph} $\mc G$, which is a bipartite graph with the vertex set $V\sqcup W$ that captures the connectivity between errors and defects. Specifically, $V$ is the set of \emph{error locations}---the qubits in \cc and \surfc, and the qubits and measurements at each time in \tcnot---and $W$ is the set of constraints. We connect $v\in V$ and $w\in W$ if there is an error supported on $v$ that violates the constraint $w$, and we say that $v$ and $w$ are adjacent.
For subsets of $E$, we say that two error locations are adjacent if they are next nearest neighbors in $\mc G$.
A \emph{connected error component} of $E$ is the restriction of $E$ to a connected set under this metric. We may refer to the $k$-neighborhood of a set of defects $g$ as those error locations with distance at most $2k-1$ from $g$.

Finally, we note that the minimum-weight decoding problem in \cc, \surfc, and \tcnot is equivalent to finding the most likely error for a given syndrome when the error is sampled from independent and identically distributed phase-flip noise, depolarizing noise, and phase-flip with measurement bit-flip noise, respectively.

\section{Reducing \tdm to the minimum-weight decoding problems}
\label{sec:commonproof}
In this section, we prove the following theorem.

\begin{theorem}
    \label{thm:main}
    The following three decoding problems are NP-hard.
    \begin{enumerate}
        \item \emph{\cc}: in the color code on a hexagonal lattice, finding a minimum-weight $Z$ error for a given $X$ syndrome.
        \item \emph{\surfc}: in the surface code on a square lattice, finding a minimum-weight Pauli error for a given syndrome.
        \item \emph{\tcnot}: in the setting of two square-lattice surface codes with a transversal logical CNOT gate, finding a minimum-weight Pauli $Z$ and measurement bit-flip error for a given set of triggered $X$ detectors.
    \end{enumerate}
\end{theorem}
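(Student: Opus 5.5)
We reduce \tdm to each of the three decoding problems with one common gadget framework, instantiated separately for \cc, \surfc and \tcnot. The key structural fact, set up in the preliminaries, is that in every setting the defects carry a type $A$, $B$ or $C$. Each elementary error either creates defects of a single type in pairs, or, at special locations, creates one defect of each type; in all cases the parity of each type is preserved. Pairs of equal-type defects can be cancelled cheaply by string-like errors whose weight is proportional to their length. A triple $(a,b,c)$ can be cancelled by a "junction" tree, a Steiner-like structure meeting at a special location. Wire and junction gadgets built from these let us encode \tdm in the geometry of the defect placement.

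Given an instance $T\subseteq A\times B\times C$ with $|A|=|B|=|C|=r$ and $|T|=s$, we embed a polynomial-size layout in the lattice.

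\textbf{Element defects.} For each element $x\in A\sqcup B\sqcup C$ we place a defect of the corresponding type far from everything else.

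\textbf{Triple gadgets.} For each triple $t=(a,b,c)\in T$ we place a gadget: a small cluster of defects near a junction location.

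\textbf{Wires.} Wires of type $A$, $B$, $C$ run from each element defect to the gadgets of all triples containing it. Each wire is a chain of defects of one type spaced at unit distance, so a wire can be "used" (paired internally in one phase) or "unused" (paired in the complementary phase) at a cost differing by a fixed amount.

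The gadget for $t$ has exactly two locally optimal resolutions:
\begin{itemize}
\item[(i)] \emph{active}: the junction absorbs one defect arriving along each of its three wires, at cost $w_1$;
\item[(ii)] \emph{inactive}: the gadget's defects are paired among themselves and the wires stay unused, at cost $w_0$.
\end{itemize}
Crossings of wires of different types are harmless, since for \cc the three types are colors, for \surfc the $X$ and $Z$ sectors live on separate sublattices, and for \tcnot the types are separated by patch and time. We then compute a threshold $W^\ast=(s-r)w_0+r w_1+(\text{fixed wire costs})$. The claim is that the minimum weight is at most $W^\ast$ iff $T$ has a perfect matching. This shows the search problem is NP-hard; the logical-effect version follows by arranging that the two outcomes differ by a logical operator, for example via a boundary.

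The proof steps, in order, are:
\begin{enumerate}
\item Design the wire, crossing and junction gadgets concretely on each lattice. For \cc these are triangles on the triangular dual lattice; for \surfc they are $Y$-defect junctions where an $X$-string and a $Z$-string meet on one qubit, so that a $Y$ error saves weight; for \tcnot they use the special measurement error at $t_G-\tfrac12$ on patch two, which produces one defect of each type.
\item Show completeness: a perfect matching $M$ yields an explicit error of weight exactly $W^\ast$, obtained by activating the gadgets of $M$ and deactivating the rest.
\item Show soundness: any error of weight at most $W^\ast$ yields a perfect matching.
\end{enumerate}

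Soundness is the main obstacle. We must prove a local lower bound: decompose an arbitrary error consistent with the syndrome into connected error components restricted to $k$-neighborhoods of each gadget and wire. Within each neighborhood we charge weight, showing that any resolution other than (i) or (ii) costs at least one more, and that defects cannot be cheaply matched across gadgets or between wires. The latter requires spacing gadgets at distance larger than their internal costs, so that any long-range pairing is strictly worse. Parity of each type inside each region forces every element defect to be absorbed by an active gadget. Counting then gives exactly $r$ active gadgets covering each element once, i.e. a perfect matching.

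The delicate part is the exhaustive local case analysis of the junction gadget, especially the cost of alternative trees in the color code and the $Y$-saving in the surface code. We would first try to reduce it to a finite, checkable computation on a bounded neighborhood, using the constant-factor structure noted in Appendix~\ref{app_minwt} as a guide for which lower bounds are tight.
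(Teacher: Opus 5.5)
\textbf{Main gap: crossings.} Your overall architecture matches the paper's: a reduction from \tdm, wires carrying a binary phase, a three-type junction per triple, a weight threshold, and local charging for soundness. The step that fails is your claim that crossings of differently typed wires are harmless in the two planar settings.

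In \cc every single-qubit $Z$ error flips one check of \emph{each} color, so the colors are not independent sectors. A red string and a green string routed through the same region can share triangles: their product has the same syndrome and weight smaller by twice the overlap. They can also be rerouted through junctions, since red and green strings fuse into blue.

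In \surfc the weight counts the support, so an $X$-string and a $Z$-string meeting on an edge get a $Y$ discount, and whether it is available depends on the phases of both wires. Moreover, your $B$-type wires are $Y$-wires carrying both $X$ and $Z$ defects. A $B$-wire crossing an $A$- or $C$-wire therefore shares a sector with it, and defects can be re-paired across the crossing, which is the same failure mode as a same-type crossing. In both settings this is exactly the inter-type coupling you rely on to build the junction, so it cannot be absent at crossings.

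You also never treat same-type crossings. These break under naive superposition and arise in routings like the paper's sorting network, so you would need either a gadget or a layout argument that avoids them. The paper needs explicit crossing gadgets here, each checked so that all four truth-value combinations reach the same local minimum:
\begin{itemize}
\item the Walters--Turner multi-color crossing for \cc;
\item an $X$--$Z$ crossing with excess $-1$ and a $Y$--$Z$ crossing with excess $+1$ for \surfc;
\item same-type crossings assembled from two splitting gadgets and two different-type crossings.
\end{itemize}
In \tcnot the issue disappears, because wires can be routed around each other in 3D.

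\textbf{Secondary gap: the element side.} You attach $k_x$ wires to a single defect, and parity only forces an \emph{odd} number of them to be used. Your counting step needs each activated triple to raise the total weight strictly, i.e.\ $w_1-w_0$ plus three times the used-versus-unused wire difference must be positive. Nothing in your gadgets guarantees this. If that quantity is zero, reaching $W^\ast$ only certifies an odd cover of $A\cup B\cup C$, which is not a perfect matching; indeed, odd covers can be found by Gaussian elimination over $\mathbb F_2$.

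The paper avoids this in two ways. First, it makes every gadget cost-neutral in its truth values and enforces ``exactly one TRUE'' locally, via an element gadget with $k_x$ nodes. Second, it proves soundness through Lemma~\ref{lem:errorweightlowerbound}. With gadget neighborhoods kept disjoint and separated, except that partner nodes may be adjacent, one gets $|E|\ge|\sigma|+\sum_g m_g$, with equality only if $E\subseteq R$ and every connected cover is a minimal cover. This reduces soundness to a finite check per gadget. Your global spacing-versus-internal-cost comparison does not obviously make sense for wires, whose cost grows with their length.
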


We prove Theorem~\ref{thm:main} by showing that the decision problem of determining the existence of an error of weight at most $w$ that satisfies a given syndrome is NP-complete in each of the three scenarios. The reduction follows because if we can find the minimum-weight error, we can check if its weight is at most $w$. The decision problem is in NP because a witness for a YES instance can simply be a valid low-weight error.

\begin{figure}[htpb]
    \centering
    \includegraphics[width=\textwidth,trim={1cm 2cm 1cm 2cm}, clip = True]{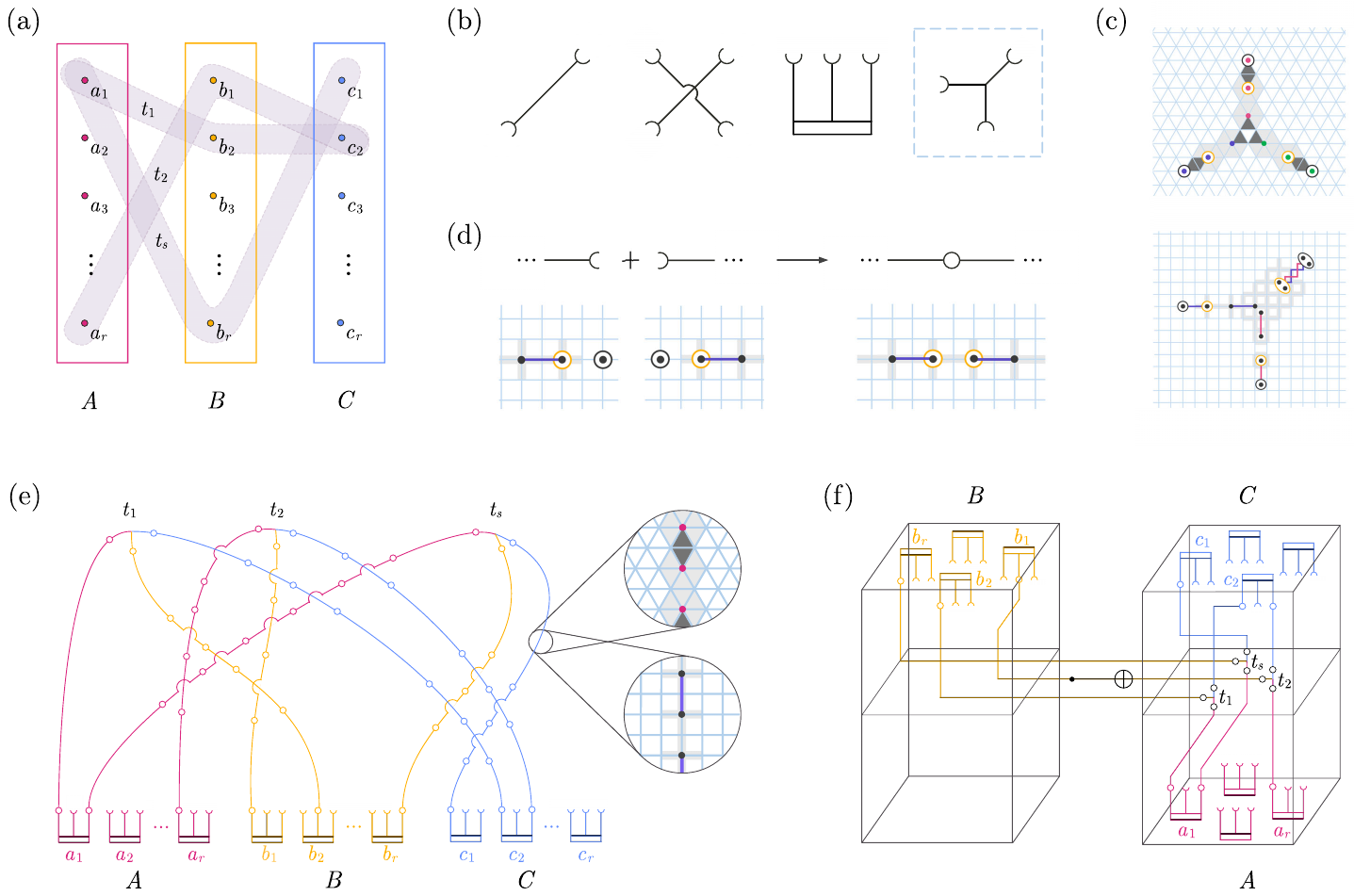}
    \caption{
    Overview of the proof of Theorem~\ref{thm:main}.
    (a) A \tdm instance, where the problem is to decide whether there exists a disjoint subset of hyperedges $t_i$ that covers all elements in $A\cup B\cup C$.
    (b) Symbols used to represent the wire, crossing, element, and splitting gadgets (from left to right).
    (c) The implementation of the splitting gadget in the color code (top) and the surface code (bottom). Dots represent defects, with the nodes circled in yellow and the partner nodes (which are defects in another gadget) circled in black. The neighborhood of the gadget is highlighted in light gray. A minimal cover in the color code with all nodes TRUE and one in the surface code with all nodes FALSE are shown.
    (d) An example of two gadgets linked together. The nodes from both gadgets have the same truth value, FALSE in this case.
    (e) Schematic for the placement of gadgets in the syndrome to be decoded. The same coarse-grained layout is used when reducing \tdm to \cc or \surfc, with differences in the implementation of the gadgets.
    (f) The placement of gadgets defining the syndrome when reducing \tdm to \tcnot.
    }
    \label{fig:reductionoverview}
\end{figure}

To show that the decision problem is NP-hard, we show a reduction from the NP-complete problem \tdm. That is, for any \tdm instance of size $s$, in each of the three scenarios, we find a decoding problem of size $\poly(s)$ such that a perfect matching exists if and only if the decoding problem has a solution of sufficiently low weight.
The main ideas in the reduction are shown in Fig.~\ref{fig:reductionoverview}.
The boundary conditions of the QEC code are irrelevant if we place the syndrome deep in the bulk of a sufficiently large lattice (e.g., linear size twice the diameter of the syndrome).

The syndrome $\sigma$ defining the decoding problem will be decomposed into various \emph{gadgets}, which are collections of defects placed in a very particular pattern; see Fig.~\ref{fig:reductionoverview}(c) for an example. The gadgets are designed so that any error with the syndrome $\sigma$ has weight at least $w$, and any valid error with weight $w$ is constrained to be in one of several specific configurations around each gadget; see Lemma~\ref{lem:errorweightlowerbound}.
The gadgets are arranged in a way dictated by the \tdm instance; see Fig.~\ref{fig:reductionoverview}(e)(f).
This way, if the errors around each gadget are in one of the allowed configurations, and the configurations around adjacent gadgets are consistent with each other, then a perfect matching for the \tdm instance can be inferred.
Conversely, a solution to the $\tdm$ instance will allow us to choose minimal error configurations around each gadget that are consistent with each other and can be combined into a valid global error of weight $w$.

Formally, a gadget $g$ comprises the following: a set of defects; a certain chosen subset, called \emph{nodes}; a set of error locations $R_g$, called the \emph{neighborhood}; and an integer $m_g$, called the \emph{error excess}.
For brevity, we often refer to the set of defects as the gadget.
The neighborhood of a defect is defined to be the 2-neighborhood in the case of a $Y$ defect in the surface code and the 1-neighborhood for all other defects. The neighborhood $R_g$ of the gadget $g$ is the union of the neighborhoods of all defects in $g$.

The gadgets have certain properties, which are marked as Assumptions throughout this section. We prove the reduction assuming these properties, deferring the details of the gadgets' construction and proofs of the required properties to Sec.~\ref{sec:gadgets}.
The placement of the gadgets and nodes will satisfy the following.

\begin{assumption}[Separation between gadgets]
    \label{ass:separation}
    The neighborhoods of two defects $p, p'$ from two different gadgets $g$, $g'$ are at distance at least two apart, with the exception that the neighborhoods of two nodes may be adjacent (but still disjoint).
    In this case, we call $p$ and $p'$ \emph{partner nodes}. Each node of a gadget will have exactly one partner node from some other gadget, which is of the same defect type.
\end{assumption}

We say that an error $E_g$ \emph{covers} the gadget $g$ if the syndrome caused by $E_g$ contains $g$.
The integer $m_g$ is chosen so that
\begin{equation}
    \label{eq:coversizebound}
    \min_{E_g: \text{ cover of $g$}} |E_g\cap R_g| = |g| + m_g.
\end{equation}
We say that any $E_g$ that achieves the minimum in Eq.~\eqref{eq:coversizebound} and is contained in $R=\bigcup_g R_g$ is a \emph{minimal cover} of $g$.
For a global error $E$ satisfying the syndrome, the \emph{connected cover} $E_g$ of a gadget $g$ is the smallest subset of $E$ that is a union of connected error components and covers $g$.

\begin{lemma}
    \label{lem:errorweightlowerbound}
    If the gadgets satisfy Assumption~\ref{ass:separation} (in particular, having disjoint neighborhoods), then any error $E$ giving syndrome $\sigma$ has weight $|E|\ge |\sigma|+m$, where $m=\sum_g m_g$. If equality holds, then the connected cover of each gadget is a minimal cover.
\end{lemma}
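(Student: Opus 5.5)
The idea is to split $E$ according to the disjoint gadget neighborhoods and apply the defining property \eqref{eq:coversizebound} to each piece. Since $\sigma$ is the disjoint union of the gadgets, $|\sigma|=\sum_g |g|$, so it suffices to show
\begin{equation*}
|E| \ge \sum_g |E\cap R_g|
\end{equation*}
and, for each $g$, $|E\cap R_g| \ge |g|+m_g$. The first inequality is immediate, because by Assumption~\ref{ass:separation} the neighborhoods $R_g$ of distinct gadgets are pairwise disjoint subsets of error locations. For the second, let $E_g$ be the connected cover of $g$. It is a subset of $E$ whose syndrome contains $g$, hence it is a cover of $g$. Equation~\eqref{eq:coversizebound} then gives $|E\cap R_g|\ge |E_g\cap R_g|\ge |g|+m_g$. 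Summing over $g$ gives $|E|\ge|\sigma|+m$.

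I must first check that the connected cover is well defined. Every connected error component of $E$ flips the defects adjacent to it. Since the syndrome of $E$ is the sum of the syndromes of its components, every defect of $g$ is flipped an odd number of times in total. So the union of all components touching $g$ (or $E$ itself) covers $g$. The family of unions of components that cover $g$ is therefore nonempty, and I take a smallest one. The bound $|E_g\cap R_g|\ge|g|+m_g$ holds for any cover by definition of $m_g$, so no further property of $E_g$ is needed for the inequality.

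For the equality case, suppose $|E|=|\sigma|+m$. Then every inequality in the chain is tight. In particular:
\begin{enumerate}
\item $E\subseteq R=\bigcup_g R_g$;
\item $|E\cap R_g|=|E_g\cap R_g|=|g|+m_g$ for every $g$.
\end{enumerate}
From the second point, $E_g\cap R_g = E\cap R_g$, since one is a subset of the other and they have the same size. It remains to show $E_g\subseteq R$, so that $E_g$ meets the definition of a minimal cover, and that $|E_g|$ itself attains the minimum. Because $E\subseteq R$ and $E_g\subseteq E$, we get $E_g\subseteq R$ at once.

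The main obstacle is this: $E_g$ could a priori contain components that lie in some other neighborhood $R_{g'}$ rather than in $R_g$. Then $|E_g|>|E_g\cap R_g|$, and $E_g$ might not be literally the minimizer. I would handle this with the separation assumption. A connected component of $E$ contained in $R$ that meets both $R_g$ and $R_{g'}$ must pass between adjacent locations of the two neighborhoods. By Assumption~\ref{ass:separation} this can only happen across a pair of partner nodes, so components cannot wander between gadgets except at those points.

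If $E_g$ contained a component disjoint from $R_g$, that component would sit entirely in other neighborhoods and would not touch $g$. Removing it would still leave a cover of $g$, because its syndrome avoids the defects of $g$. This contradicts the minimality of the connected cover. Hence every component of $E_g$ meets $R_g$.

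For components that straddle two neighborhoods through partner nodes, I would interpret minimal cover via the count $|E_g\cap R_g|$ together with containment in $R$, exactly as Eq.~\eqref{eq:coversizebound} is phrased. With that reading, the tight equality already shows that $E_g$ is a minimal cover. I expect this interpretive point, namely that minimality is measured only inside $R_g$ while straddling components are allowed, to be the delicate part to state cleanly.
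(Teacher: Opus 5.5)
Your proof is correct and is essentially the paper's argument: both chain $|E|\ge\sum_g|E_g\cap R_g|\ge\sum_g(|g|+m_g)$ using the disjointness of the $R_g$ and Eq.~\eqref{eq:coversizebound}, and both read off $E\subseteq R$ and $|E_g\cap R_g|=|g|+m_g$ in the equality case. You pass through $|E\cap R_g|$ where the paper uses $|\bigcup_g(E_g\cap R)|$, which makes no real difference, and your concern about components straddling two neighborhoods is unnecessary: the paper defines a minimal cover as a cover contained in $R$ that attains the minimum of $|E_g\cap R_g|$, which is exactly what your tight chain gives.
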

\begin{proof}
    We can bound the weight of $E$ as
    \begin{equation}
    \label{eq:errorwtbound}
    \textstyle
        |E| \ge \left|\bigcup_g (E_g\cap R)\right| \ge \left|\bigcup_g (E_g\cap R_g)\right| = \sum_g |E_g\cap R_g| \ge \sum_g (|g|+m_g) = |\sigma| + m,
    \end{equation}
    where $E_g$ is the connected cover of the gadget $g$. Here, we have used the disjointness of the $R_g$ in the first equality and Eq.~\eqref{eq:coversizebound} in the last inequality.

    Suppose $E$ saturates this inequality. In the bound, the first inequality becomes an equality only if $E\subseteq R$, meaning $E_g\subseteq R$ for all gadgets $g$. Then, equality holds for the last inequality exactly when $E_g$ is a minimal cover of $g$ for all $g$.
\end{proof}

While minimal covers exist for each gadget individually, there may not be a globally consistent error that has the minimum possible weight on every gadget. Our goal is to show that an error of weight $|\sigma|+m$ exists if and only if the corresponding \tdm instance has a perfect matching.

The minimal covers of the gadgets we consider cannot have syndromes at arbitrary locations outside of $g$. By inspection of the gadgets described in Sec.~\ref{sec:gadgets} and the separation between gadgets in Assumption~\ref{ass:separation}, we will arrive at the following property. 
\begin{assumption}
    \label{ass:gadgetdefectsubset}
    If $E_g$ is a minimal cover of a gadget $g$, then the syndrome caused by $E_g$ is contained in $g$ and its partner nodes.
\end{assumption}

A gadget may have multiple minimal covers. Assumption~\ref{ass:gadgetdefectsubset} motivates us to associate truth values to the nodes of a gadget in a minimal cover.
Specifically, a node is assigned TRUE if its partner node is a defect caused by the error (e.g., Fig.~\ref{fig:reductionoverview}(c)(top))
and FALSE if not (e.g., Fig.~\ref{fig:reductionoverview}(c)(bottom)).
Therefore, in a globally minimal error, gadgets that are linked together via partner nodes have the same truth value at those nodes, which is determined by if the error connects the two nodes or is disjointly supported within each gadget; see Fig.~\ref{fig:reductionoverview}(d) for an example.
A gadget will enforce that only certain combinations of truth values within its own nodes are possible.

There are four different types of gadgets: wire gadgets, crossing gadgets, element gadgets, and splitting gadgets. We now describe the key properties of these gadgets, including the truth values of their nodes in the different minimal covers.

\begin{assumption}[Properties of gadgets] The gadgets have the following properties.
    \label{ass:gadgetproperties}
    \begin{itemize}
        \item A \textbf{wire gadget} has two nodes of the same defect type, which must be both TRUE or both FALSE in a minimal cover.
        \item A \textbf{crossing gadget} has two pairs of nodes. The nodes of each pair have the same defect type and truth value and are placed on opposite sides of the gadget.
        \item An \textbf{element gadget} has $k$ nodes, all of the same defect type. In a minimal cover, exactly one of the nodes is TRUE and all others are FALSE.
        \item A \textbf{splitting gadget} has three nodes, each of different defect type. The truth values of all three nodes are equal.
    \end{itemize}
     Minimal covers with the allowed truth values exist.
     Except for the wire gadget, whose nodes may be placed at any location hosting the correct defect type, all the other gadgets have constant size.
\end{assumption}

We now explain how to connect the gadgets to each other, as illustrated in Fig.~\ref{fig:reductionoverview}(e)(f), to obtain the reduction from \tdm and prove Theorem~\ref{thm:main}.

\begin{proof}[Proof of Theorem~\ref{thm:main}]
For $x\in A\cup B\cup C$, let $k_x$ denote the number of hyperedges $t\in T$ that contain $x$.
For each $a\in A$, we place an element gadget $g_a$ with $k_a$ nodes, all of $A$ type; similarly, we place corresponding gadgets for each $b\in B$ and $c\in C$.
For every hyperedge $t\in T$, we place a splitting gadget $g_t$.
We connect the splitting gadgets to the element gadgets via wire gadgets. Wire gadgets can be thought of as transmitting information (i.e., the truth value) between two locations which may be far apart. For a hyperedge $t=(a,b,c)\in T$, we connect the $A$-type node of $g_t$ to one of the nodes of $g_a$, the $B$-type node to one of the nodes of $g_b$, and the $C$-type node to one of the nodes of $g_c$. As there are $k_a$ nodes of $g_a$ (and similarly for $g_b$ and $g_c$), they are each connected via wire gadgets to different splitting gadgets.
In a 2D geometry, such as for \cc or \surfc, different wires may need to cross each other. We cannot directly superpose the two wire gadgets since the overlap may result in minimal covers of the combined gadget that swaps the truth values between the two ends of a wire. Instead, we use a crossing gadget. Crossing gadgets are not needed for \tcnot since the wires can braid around each other in a 3D geometry.\footnote{Although not essential to the proof, we note that the syndrome created in this way is always satisfiable by \emph{some} error. This is because the defects in the wire, crossing, and splitting gadgets can be locally removed by the minimal cover with all nodes set to FALSE. In each of the three decoding scenarios, an element gadget contains an odd number of defects of one type. Since there are $r$ element gadgets of each type, using errors that create one defect of each type, one can find some error that removes all those defects.}

Suppose there is a perfect matching $M$ in the \tdm instance. For each hyperedge $t\in M$ chosen in the matching, we take the minimal cover $E_{g_t}$ that assigns TRUE to all nodes of $g_t$. For all other hyperedges $t\in T\setminus M$, we use $E_{g_t}$ that assigns FALSE to all nodes of $g_t$. A consistent choice of minimal covers of the wire gadgets and crossing gadgets connecting to these splitting gadgets exists. Because $M$ is a perfect matching, each element $x\in A\cup B\cup C$ will be in exactly one $t\in M$. We use the minimal cover $E_{g_x}$ of $g_x$ that sets the node connected (via a wire gadget) to the splitting gadget $g_t$ to TRUE while setting rest to FALSE. This is consistent because $x$ is only in one hyperedge; all the nodes set to FALSE are necessarily connected via wire gadgets to splitting gadgets corresponding to hyperedges in $T\setminus M$. Because $E=\bigcup_g E_g$ is consistent at every node, its total weight is $|E| = |\sigma| + m$.

Conversely, suppose an error $E$ with weight $|E|\le |\sigma|+m$ exists. By Lemma~\ref{lem:errorweightlowerbound}, it has weight $|E|=|\sigma|+m$ and the connected covers $E_g$ are minimal covers of each of the gadgets. Let $G_M$ be the set of splitting gadgets whose nodes are set to TRUE and $M\subseteq T$ be the corresponding subset of hyperedges.
Each element gadget has exactly one node set to TRUE, which must be connected via a wire gadget to some splitting gadget in $G_M$. This shows that each element of $A\cup B\cup C$ is in exactly one of the hyperedges in $M$, i.e., $M$ is a perfect matching for the \tdm instance.
\end{proof}

\section{Gadget construction and placement}
\label{sec:gadgets}
We now describe the details of how the gadgets are implemented in the three decoding settings and show that they have the required properties in Assumption~\ref{ass:gadgetproperties}.
We also explain how to place them so that they are sufficiently separated while having the required connections, satisfying Assumption~\ref{ass:separation}.
For all the gadgets, Assumption~\ref{ass:gadgetdefectsubset} can be verified by inspection, so we will not elaborate on it.

\subsection{Color code gadgets}
\label{sec:CC}
We start with the color code on a hexagonal lattice, whose decoding problem for $Z$ errors was very recently shown to be NP-hard in Ref.~\cite{walters2026CCNPhard}.
That work also introduced the splitting gadget (called RGB-duplicator subgadget), the wire gadget, and the crossing gadget between wires of the same or different colors (called full crossing gadget and multi-color crossing subgadget, respectively). These gadgets, which we illustrate in Figs.~\ref{fig:gadgetsCC} and~\ref{fig:crossinggadgetsametype} for completeness, were shown to have the required properties in Assumption~\ref{ass:gadgetproperties}. There is no error excess for any of these gadgets.

\begin{figure}[htpb]
    \centering
    (a)\includegraphics[width=0.22\textwidth, page=1, trim={2.5cm 6cm 12cm 8cm}, clip = True]{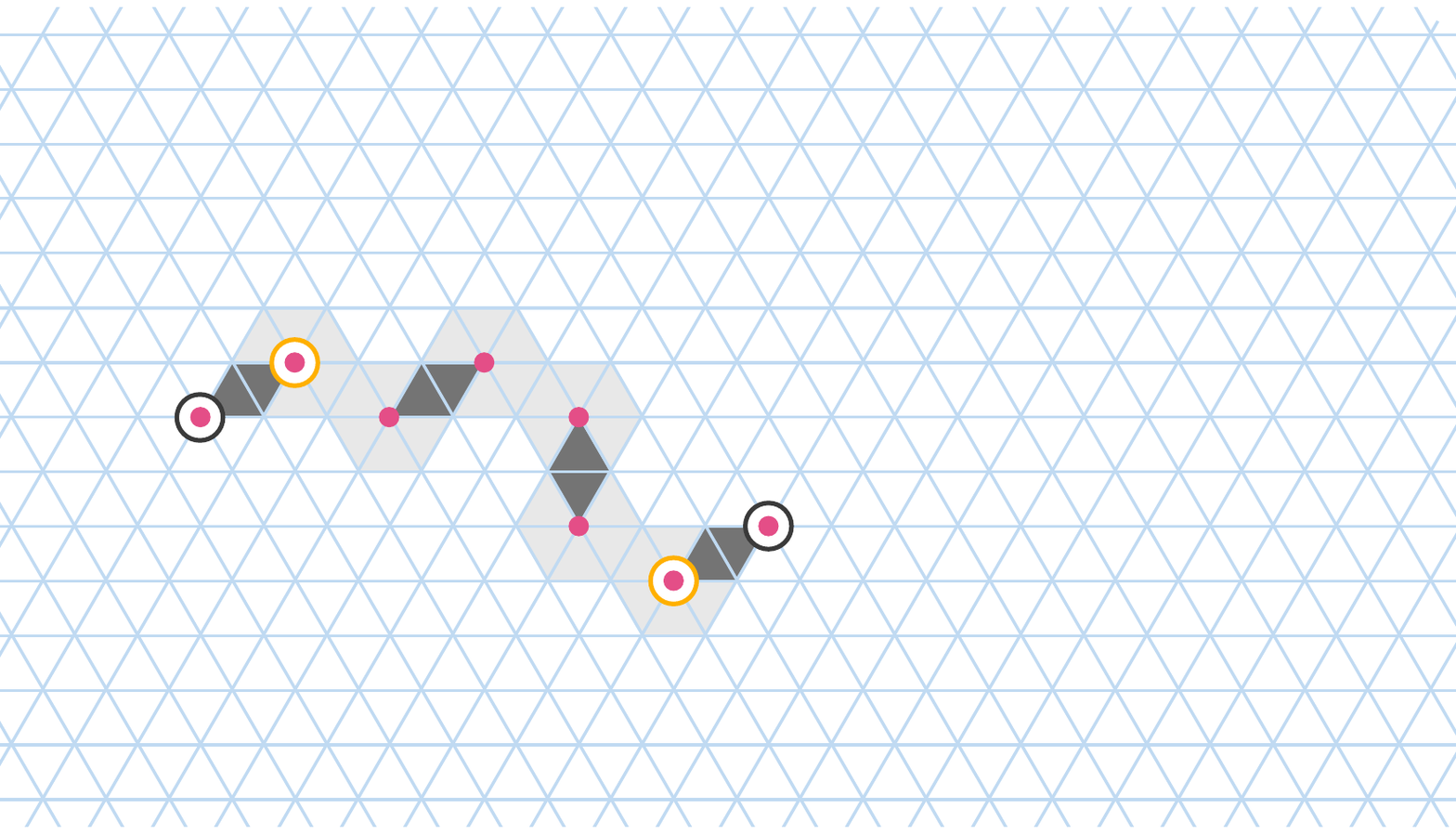}
    (b)\includegraphics[width=0.22\textwidth, page=2, trim={2.5cm 6cm 12cm 8cm}, clip = True]{Figures/CCgadgets.pdf}
    (c)\includegraphics[width=0.22\textwidth, page=3, trim={6cm 5.5cm 9.5cm 5.5cm}, clip = True]{Figures/CCgadgets.pdf}
    (d)\includegraphics[width=0.22\textwidth, page=4, trim={6cm 5.5cm 9.5cm 5.5cm}, clip = True]{Figures/CCgadgets.pdf}
    (e)\includegraphics[width=0.22\textwidth, page=5, trim={5cm 3cm 9.5cm 5.5cm}, clip = True]{Figures/CCgadgets.pdf}
    (f)\includegraphics[width=0.22\textwidth, page=6, trim={5cm 3cm 9.5cm 5.5cm}, clip = True]{Figures/CCgadgets.pdf}
    (g)\includegraphics[width=0.22\textwidth, page=7, trim={5cm 3cm 9.5cm 5.5cm}, clip = True]{Figures/CCgadgets.pdf}
    (h)\includegraphics[width=0.22\textwidth, page=8, trim={5cm 3cm 9.5cm 5.5cm}, clip = True]{Figures/CCgadgets.pdf}
    \caption{(a)(b) The wire gadget, (c)(d) splitting gadget, and (e)-(h) crossing gadget between wires of different defect types in the \cc decoding problem, as introduced in Ref.~\cite{walters2026CCNPhard}.
    The different minimal covers are shown.
    In all figures, dots represent defects (unsatisfied $X$ stabilizers in the color code), with the nodes circled in yellow and the partner nodes (which are defects in another gadget) circled in black. The neighborhood of the gadget is highlighted in light gray.
    }
    \label{fig:gadgetsCC}
\end{figure}

\begin{figure}[htpb]
    \centering
    \includegraphics[width=0.6\textwidth, trim={0cm 5.5cm 0cm 5.5cm}, clip = True]{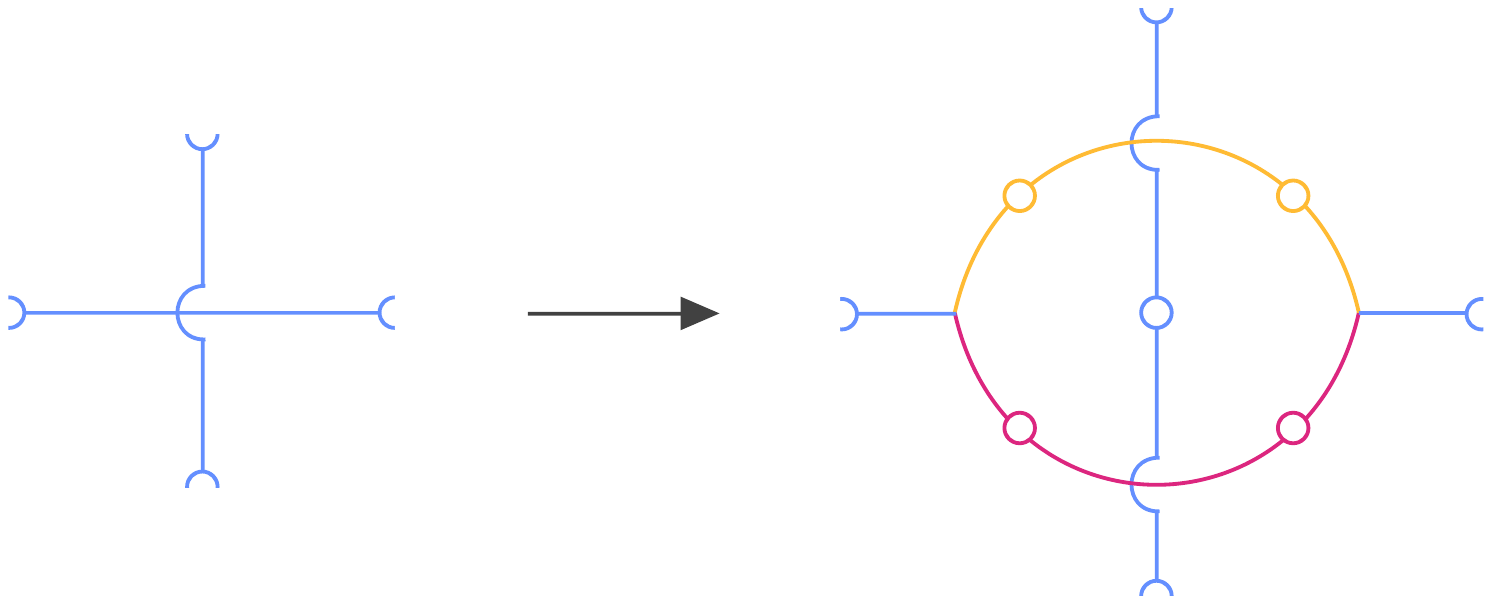}
    \caption{The crossing gadget for wires of the same defect type is constructed from two splitting gadgets and two crossing gadgets between wires of different defect types.
    }
    \label{fig:crossinggadgetsametype}
\end{figure}

It remains to construct the element gadget, which we depict in Fig.~\ref{fig:setelementCC}. The error excess is bounded using the following argument.

\begin{lemma}
    \label{lem:isolateddefects}
    If a gadget $g$ consists of defects of distance at least four from each other in the Tanner graph (which we call \emph{isolated defects}), then it has error excess $m_g\ge 0$.
\end{lemma}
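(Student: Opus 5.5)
We need to show that any cover $E_g$ of $g$ satisfies $|E_g\cap R_g|\ge |g|$; by Eq.~\eqref{eq:coversizebound} this is exactly $m_g\ge 0$. The plan is to exhibit, for each defect $p\in g$, an error location in $E_g\cap R_g$ that is charged to $p$ alone, so that the charging is injective.

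The first step is that each defect must be touched. Since $E_g$ covers $g$, every $p\in g$ is an unsatisfied constraint of the syndrome of $E_g$. Hence at least one error location of $E_g$ is adjacent to $p$ in the Tanner graph $\mc G$, i.e.\ lies at distance $1$ from $p$. Every such location lies in the $1$-neighborhood of $p$, which is contained in $R_g$, because the neighborhood of a defect always contains its $1$-neighborhood. (For a $Y$ defect the neighborhood is the $2$-neighborhood, which is even larger.) So we may choose $v_p\in E_g\cap R_g$ adjacent to $p$.

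The second step is injectivity, and this is where the isolation hypothesis is used. Suppose $v_p=v_{p'}$ for distinct $p,p'\in g$. Then $p$ and $p'$ would both be adjacent to this common location, so $d_{\mc G}(p,p')\le 2<4$, contradicting isolation. Therefore $p\mapsto v_p$ is injective, and $|E_g\cap R_g|\ge |g|$, giving $m_g\ge 0$.

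The argument is short, and the only thing to check carefully is the meaning of ``defect'' and ``adjacent'' in each setting. For the color code a defect is a single vertex-constraint, so the argument applies verbatim. If the lemma is also to be used for combined $Y$ defects in the surface code, a defect is a pair of constraints. Then $v_p$ is taken adjacent to either constituent, and two distinct defects sharing a location would again be within distance $2$ of each other (measured between constituents), which isolation at distance four excludes. I expect this bookkeeping of the distance convention to be the only subtle point.

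Note that the bound only needs distance at least $3$. Distance $4$ (two layers of error locations apart) matters only because pairing parity makes distances between constraints even in $\mc G$, so the first distance exceeding $2$ is $4$.
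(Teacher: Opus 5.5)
Your proof is correct and follows the paper's argument exactly: each defect needs an adjacent error location, which lies in its 1-neighborhood and hence in $R_g$, and isolation ensures no error location is adjacent to two defects, giving $|E_g\cap R_g|\ge |g|$. Your extra remarks are also accurate: composite $Y$ defects are handled by measuring distance between constituents, and bipartiteness of the Tanner graph makes the distance-$3$ and distance-$4$ conditions equivalent. The paper, however, only invokes the lemma for non-$Y$ defects.
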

\begin{proof}
    There must be an error adjacent to every defect in $g$, which is necessarily in $R_g$ because it contains the 1-neighborhood of $g$. The statement then follows because the condition on the distances between the defects means that each error can be adjacent to at most one defect.
\end{proof}
Error configurations $E_g$ similar to the one in Fig.~\ref{fig:setelementCC} achieve $|E_g\cap R_g|=|\sigma|$, showing that $m_g=0$ and that exactly one node is TRUE in a minimal cover. Note that $E_g\subseteq R$ because every error is adjacent to some defect.

\begin{figure}[htpb]
    \centering
    \includegraphics[width=0.45\textwidth,trim={2cm 3.2cm 7.5cm 7.5cm}, clip = True]{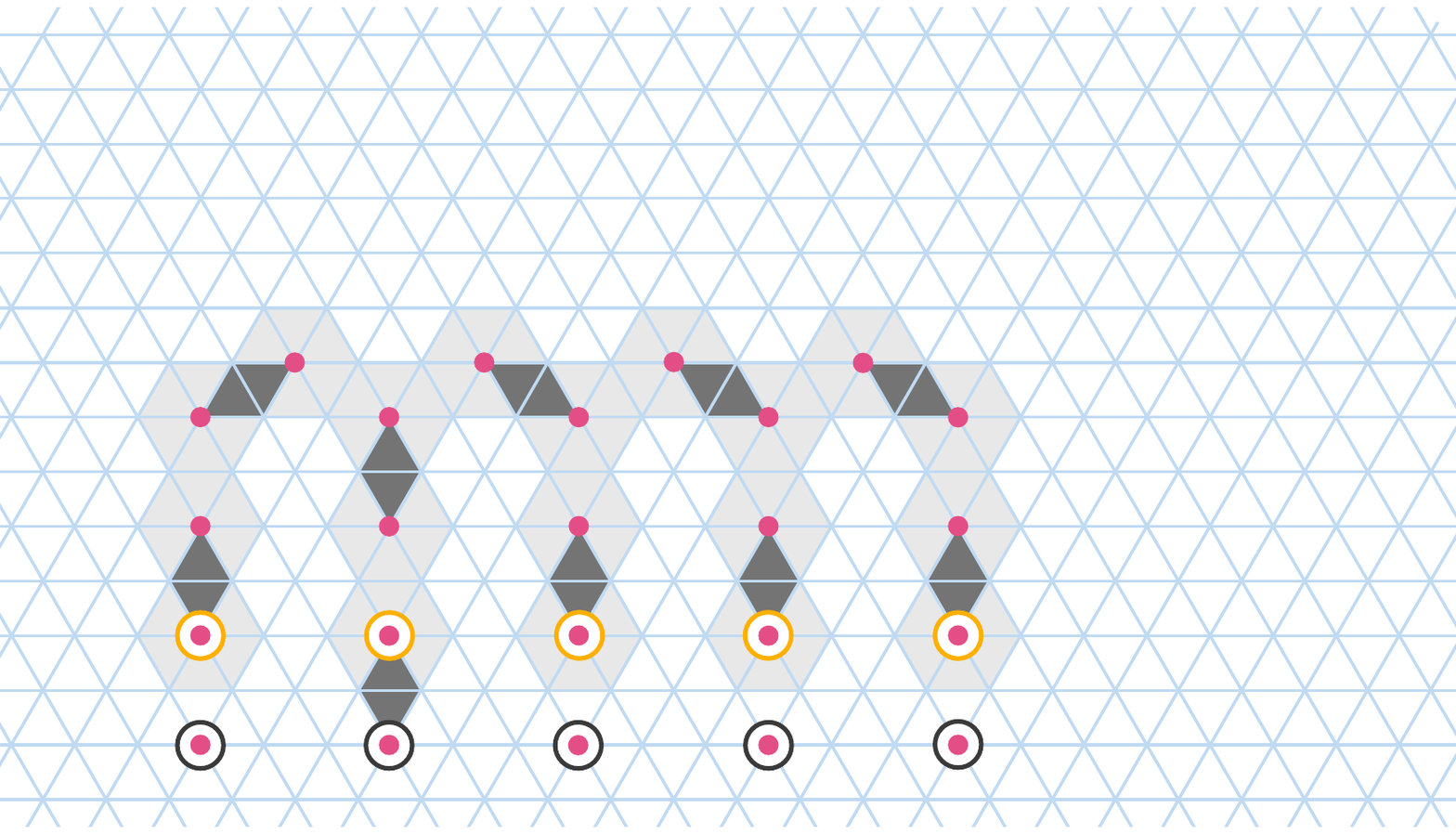}
    \caption{The element gadget with $R$ defects. Here, we illustrate the minimal cover with the second node set to TRUE.}
    \label{fig:setelementCC}
\end{figure}

We place the gadgets as shown in Fig.~\ref{fig:reductionoverview}(e). This allows the gadgets to be well separated from each other, satisfying Assumption~\ref{ass:separation}. In particular, note that each gadget takes up a constant amount of space, except possibly the wire gadgets. The width of total syndrome configuration is $O(s)$. We implement layers of swaps to route the $3s$ wires from the element gadgets to the appropriate splitting gadgets. With 1D connectivity, any permutation of $3s$ elements can be implemented with a depth-$O(s)$ sorting network consisting of layers of nearest-neighbor swaps~\cite{knuth98sorting}.
Therefore, the height of the syndrome configuration is also $O(s)$. This shows that the decoding problem has polynomial size, as the syndrome can be embedded in a color code lattice with $O(s^2)$ qubits.

\subsection{Surface code gadgets}
\label{sec:SC}
Next, we construct the gadgets for the surface code on a square lattice. 
The wire gadgets $g$ for $Z$ defects are shown in Fig.~\ref{fig:WireZgadget}, with similar ones for $X$ defects. Because the defects are isolated, Lemma~\ref{lem:isolateddefects} implies $m_g\ge 0$ for the gadget in Fig.~\ref{fig:WireZgadget}(a)(b). The configurations shown achieve $|E_g\cap R_g|=|g|$, so those are minimal covers and $m_g=0$.
The gadget in Fig.~\ref{fig:WireZgadget}(c)(d) can be seen to have error excess $m_g=-1$ based on the minimal covers shown in that figure and Lemma~\ref{lem:isolateddefects} (removing one of the $X$ defects gives a set of isolated defects).
We will often use a parity argument which says that any error in the bulk of the surface code causes an even number of $X$ defects and an even number of $Z$ defects (assuming that we decompose any $Y$ defect into an $X$ and a $Z$ defect)---to reason about the truth values in a minimal cover.
In this case, it implies that the nodes must either both be TRUE or both be FALSE, as desired in Assumption~\ref{ass:gadgetproperties}.
Using a combination of the two gadgets allows us to place the nodes of a $Z$ wire gadget at any vertex on the lattice.

\begin{figure}[htpb]
    \centering
    (a)\includegraphics[width=0.4\textwidth, page=1, trim={1cm 8.5cm 11.5cm 6.2cm}, clip = True]{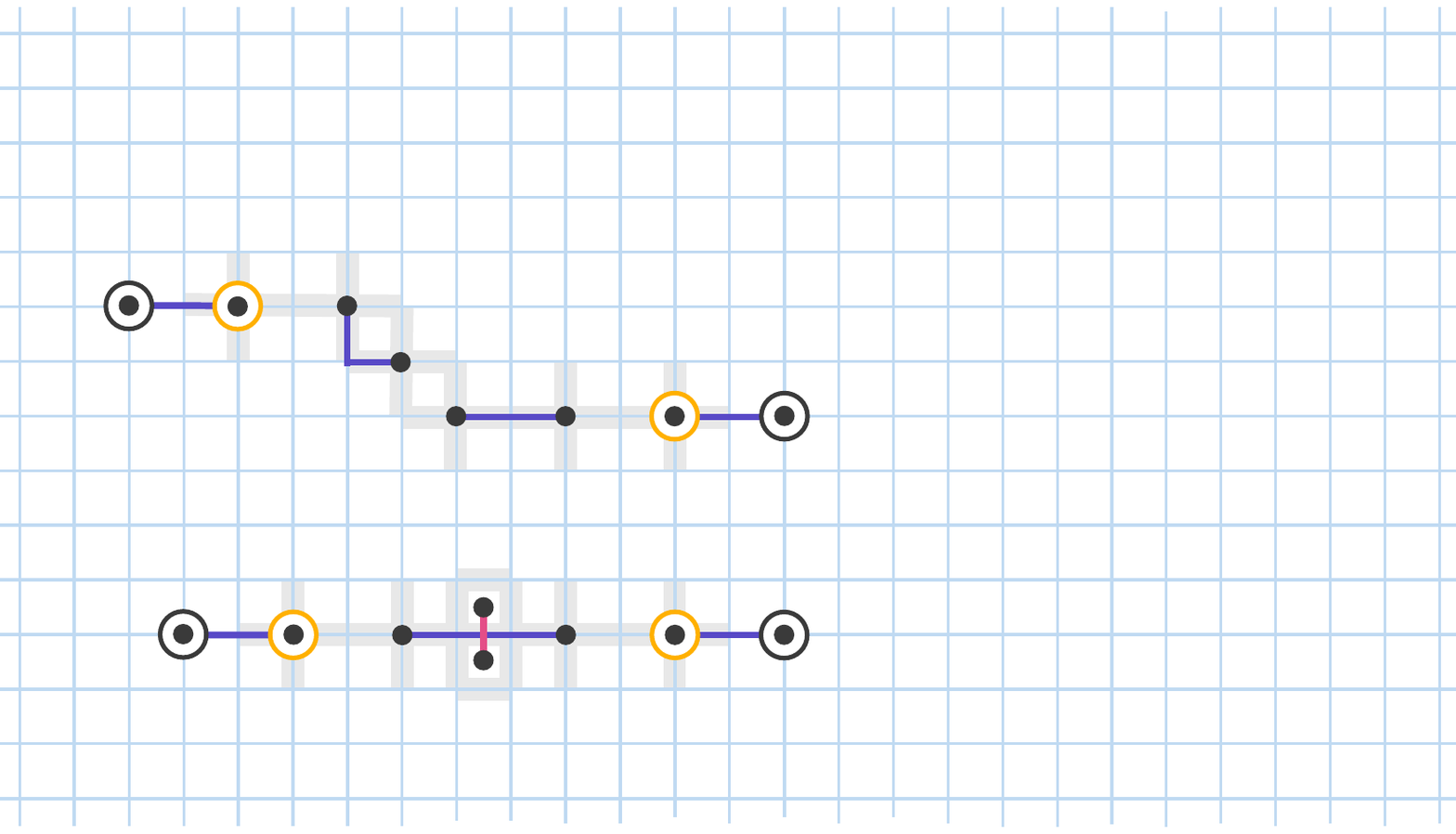}
    (b)\includegraphics[width=0.4\textwidth, page=2, trim={1cm 8.5cm 11.5cm 6.2cm}, clip = True]{Figures/WireZgadget.pdf}
    (c)\includegraphics[width=0.4\textwidth, page=1, trim={1cm 4.2cm 11.5cm 12.5cm}, clip = True]{Figures/WireZgadget.pdf}
    (d)\includegraphics[width=0.4\textwidth, page=2, trim={1cm 4.2cm 11.5cm 12.5cm}, clip = True]{Figures/WireZgadget.pdf}
    \caption{Two types of wire gadgets with $Z$ defects in the surface code. Minimal covers with both nodes TRUE and both nodes FALSE are shown in (a)(c) and (b)(d), respectively.
    In our surface code figures, blue strings denote $Z$ errors and red strings denote $X$ errors (with $Y$ errors on any overlap).
    Having access to both gadgets in (a)(b) and (c)(d) allows us to place the nodes at any vertex on the lattice.
    }
    \label{fig:WireZgadget}
\end{figure}

\begin{figure}[htpb]
    \centering
    (a)\includegraphics[width=0.4\textwidth, page=1, trim={1cm 8cm 9.5cm 2cm}, clip = True]{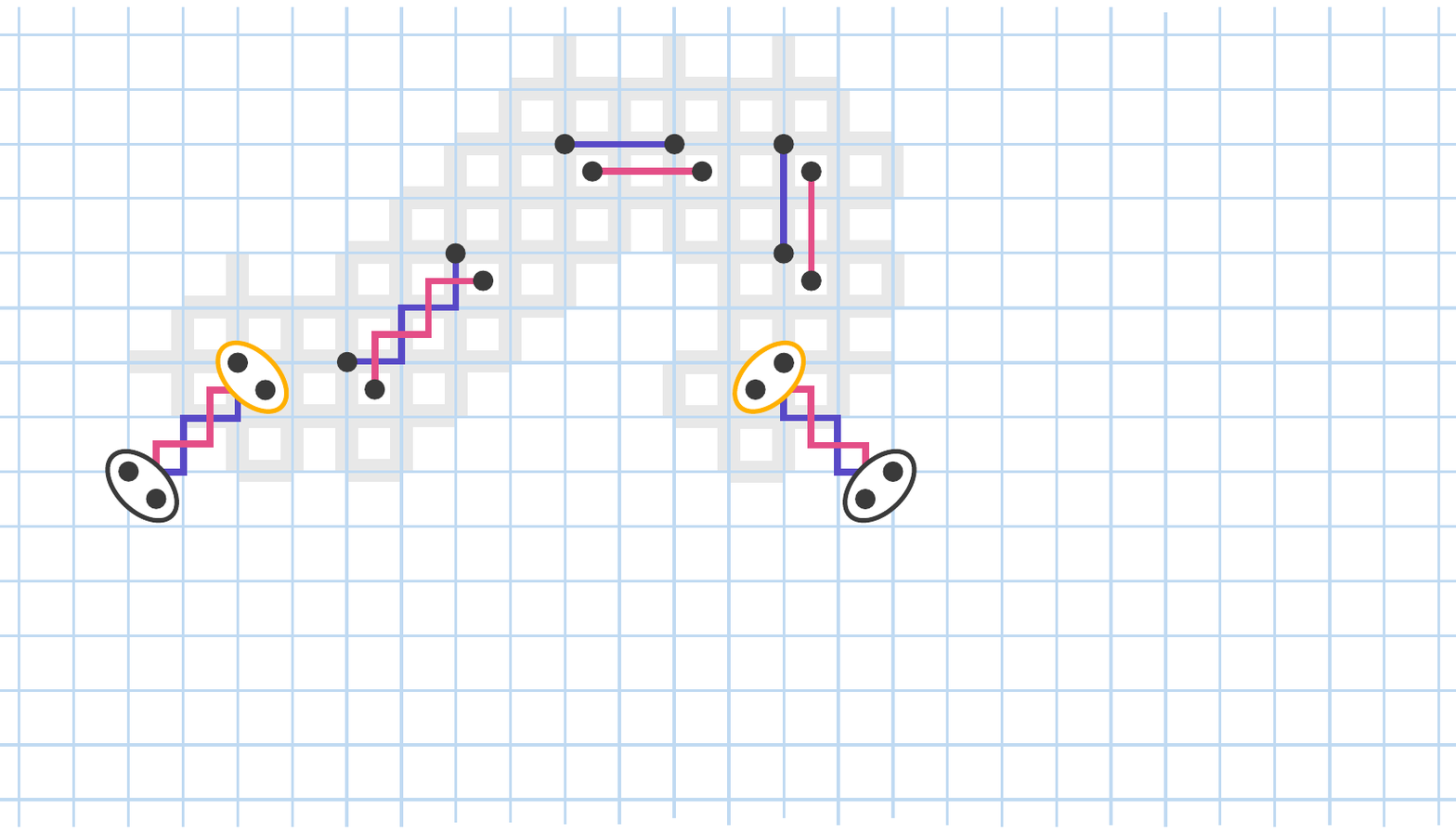}
    (b)\includegraphics[width=0.4\textwidth, page=2, trim={1cm 8cm 9.5cm 2cm}, clip = True]{Figures/WireYgadget.pdf}
    \caption{The $Y$ wire gadget in the surface code with minimal covers where (a) both nodes are TRUE and (b) both nodes are FALSE.}
    \label{fig:YWire}
\end{figure}

Fig.~\ref{fig:YWire} illustrates the $Y$ wire gadget. Although the defects are not isolated, we still have $m_g=0$ with the minimal covers shown. Indeed, the $Z$ defects in the gadget must be matched to each other, and similarly for the $X$ defects. For the diagonal part of the wire, minimum-weight matchings for both $X$ and $Z$ defects can be chosen that completely overlap, giving the combined minimum-weight configuration.
In the case of a horizontal (or vertical) segment, the configuration shown is minimal because the two $Z$ defects must be matched with a $Z$ string of length at least two and the two $X$ defects with an $X$ string of length at least two. A $Y$ error can reduce the required weight of one of these strings but not both.
As shown, there appear to be four nodes. By a parity argument, the two $X$ nodes must have the same truth value and the two $Z$ nodes must have the same truth value. In fact, all four values must be the same if there is at least one diagonal segment because the $X$ and $Z$ strings must overlap on a diagonal segment in a minimal cover, forcing the adjacent $X$ and $Z$ nodes to have the same truth value. Therefore, we can view the adjacent $X$ and $Z$ nodes as a combined $Y$ defect so that the $Y$ wire has two $Y$ nodes of the same truth value.
Note that the partner of a $Y$ node should be placed diagonally away from it, as is the case in Fig.~\ref{fig:YWire}, because its 2-neighborhood should cover half the error connecting to its partner when it is TRUE.

The splitting gadget is shown in Fig.~\ref{fig:splitWire}. The truth values of all three nodes are the same from a parity argument and the fact that a minimal cover must match the $Y$ defects using a $Y$ string. This gadget has an error excess of zero.

\begin{figure}[htpb]
    \centering
    (a)\includegraphics[width=0.3\textwidth, page=1, trim={0cm 4cm 14.5cm 4cm}, clip = True]{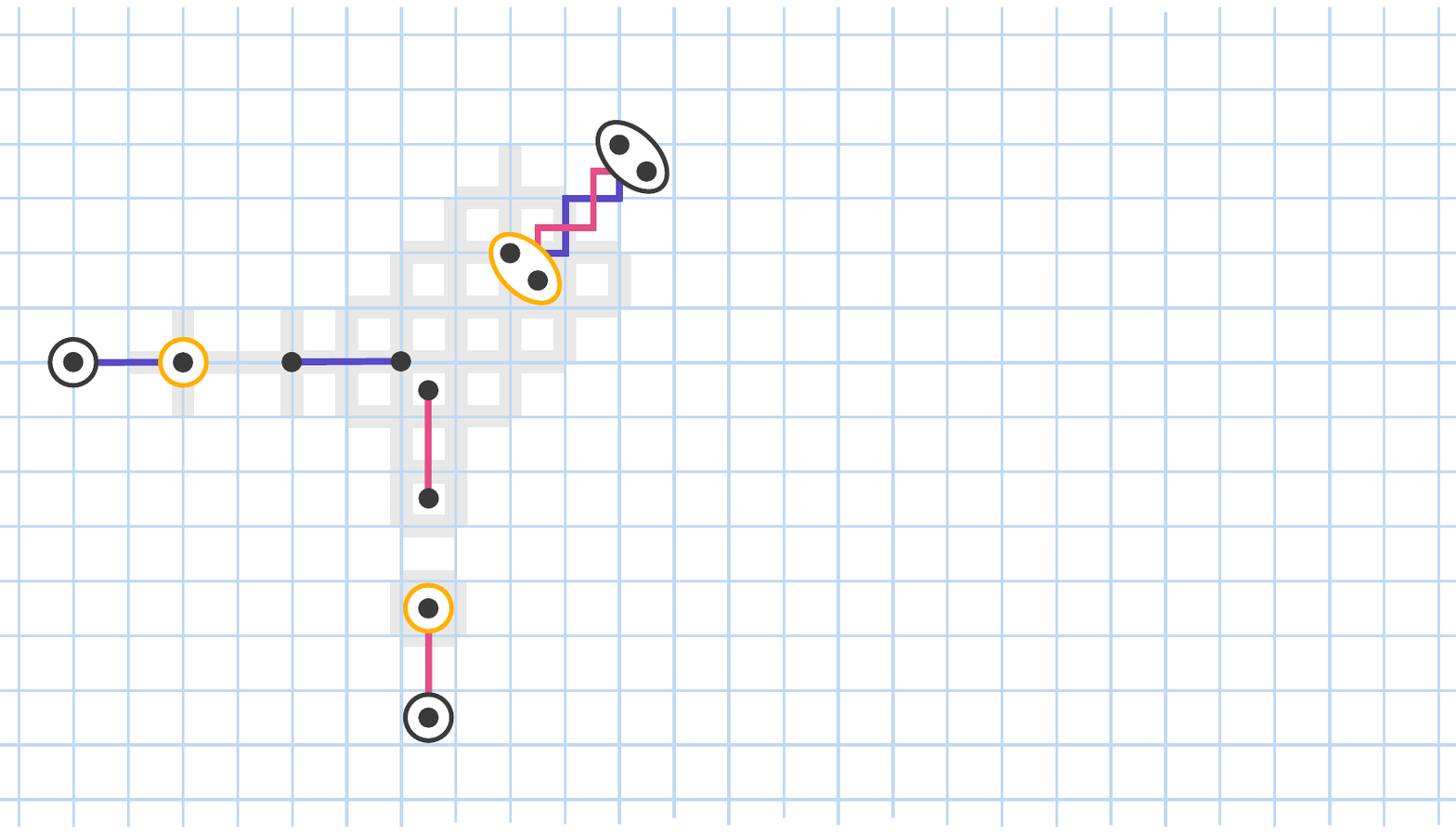}
    (b)\includegraphics[width=0.3\textwidth, page=2, trim={0cm 4cm 14.5cm 4cm}, clip = True]{Figures/SCsplittinggadget.pdf}
    \caption{The splitting gadget in the surface code. Minimal covers with (a) all three nodes TRUE and (b) all three nodes FALSE are shown.
    }
    \label{fig:splitWire}
\end{figure}

Fig.~\ref{fig:XZcrossinggadget} shows a crossing gadget between an $X$ wire and a $Z$ wire. A parity argument shows that the two $X$ defects have the same truth values and the two $Z$ defects have the same truth value. The $Z$ defects are isolated if we do not consider the $X$ defects, and vice versa. When combined, some $X$ and $Z$ defects in the center are adjacent, which allows an $X$ string and a $Z$ string to overlap by one in a minimal cover in each of the four combinations of truth values. This shows that the proposed set of defects is a valid crossing gadget with error excess $-1$.

\begin{figure}[htpb]
    \centering
    (a)\includegraphics[width=0.3\textwidth, page=1, trim={0cm 5cm 12.5cm 5cm}, clip = True]{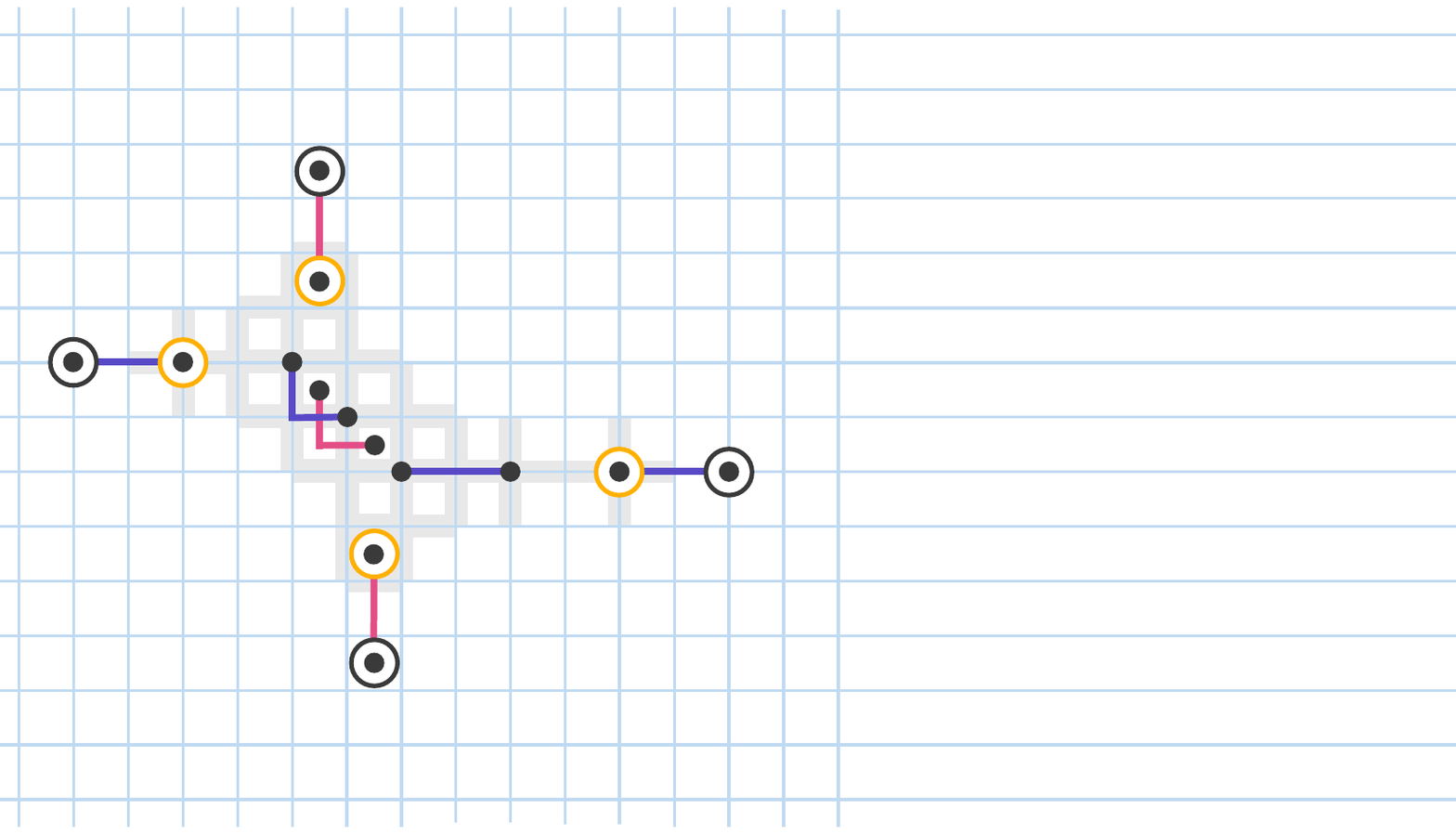}
    (b)\includegraphics[width=0.3\textwidth, page=2, trim={0cm 5cm 12.5cm 5cm}, clip = True]{Figures/XZcrossinggadget.pdf}\\
    (c)\includegraphics[width=0.3\textwidth, page=3, trim={0cm 5cm 12.5cm 5cm}, clip = True]{Figures/XZcrossinggadget.pdf}
    (d)\includegraphics[width=0.3\textwidth, page=4, trim={0cm 5cm 12.5cm 5cm}, clip = True]{Figures/XZcrossinggadget.pdf}
    \caption{The crossing gadget between an $X$ and a $Z$ wire in the surface code. Minimal covers with the different valid combinations of truth values of the nodes are shown. The center of the gadget allows an error to cover it which has weight one less than the number of defects covered, regardless of the truth values of the nodes.
    }
    \label{fig:XZcrossinggadget}
\end{figure}

A crossing gadget between a $Y$ wire and a $Z$ wire is shown in Fig.~\ref{fig:YZcrossinggadget}. The error configurations shown are the minimal covers with error excess $1$, which give the same truth values on the $Y$ nodes and on the $X$ nodes. Minimality of the covers can be seen from the fact that the outside portion of the gadget consists of $Y$ and $Z$ wires and analyzing the possible error configurations in the center.\footnote{For example, in the center of Fig.~\ref{fig:YZcrossinggadget}(a) consisting of four $Z$ defects and two $X$ defects, any matching for the $Z$ defects involves at least six errors. Since the two $X$ defects cannot be matched with a string that completely overlaps with any of the weight-six configurations, the minimum weight is seven, with one such configuration shown.}
A crossing between an $X$ wire and a $Y$ wire is similar.

\begin{figure}[htpb]
    \centering
    (a)\includegraphics[width=0.3\textwidth, page=1, trim={1cm 4cm 13.3cm 4cm}, clip = True]{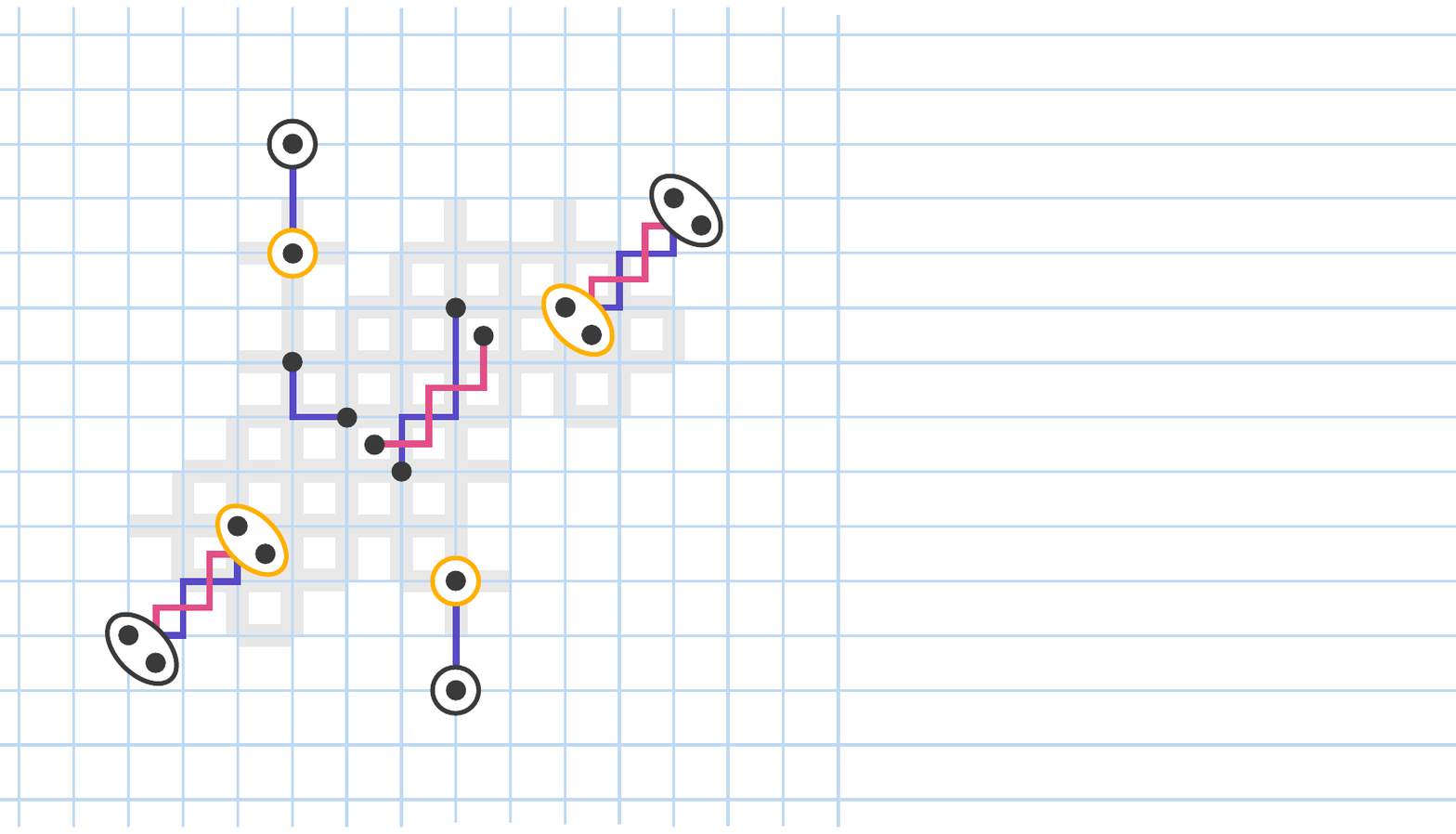}
    (b)\includegraphics[width=0.3\textwidth, page=2, trim={1cm 4cm 13.3cm 4cm}, clip = True]{Figures/YZcrossinggadget.pdf}\\
    (c)\includegraphics[width=0.3\textwidth, page=3, trim={1cm 4cm 13.3cm 4cm}, clip = True]{Figures/YZcrossinggadget.pdf}
    (d)\includegraphics[width=0.3\textwidth, page=4, trim={1cm 4cm 13.3cm 4cm}, clip = True]{Figures/YZcrossinggadget.pdf}
    \caption{The crossing gadget between a $Y$ and a $Z$ wire in the surface code. In each of the four combinations of truth values of the nodes (the minimal covers shown), the error weight is one more than the number of defects covered.
    }
    \label{fig:YZcrossinggadget}
\end{figure}

To construct a crossing between two wires $w_1$ and $w_2$ of the same defect type, we can first split $w_1$ into two wires $w_1'$, $w_1''$ of the other two defect types (and the same truth value), cross those with $w_2$, and then recombine them into one wire of the original type. This has been already shown in Fig.~\ref{fig:crossinggadgetsametype}.

Finally, the element gadget with $Z$ nodes for the $C$ set is shown in Fig.~\ref{fig:setelementSC}(a). By an analysis similar to the one for the corresponding gadget in the color code, we see that exactly one node is TRUE in a minimal cover and there is no error excess. The element gadget with $X$ nodes is analogous. Fig.~\ref{fig:setelementSC}(b) shows the gadget with $Y$ nodes for the $B$ set, which is essentially a juxtaposition of the element gadgets with $X$ nodes and the one with $Z$ nodes. Note that the diagonal connection between the node and its partner ensures that the nodes at the same position must be chosen to be TRUE when considering the gadget as two separate element gadgets.

\begin{figure}[htpb]
    \centering
    (a)\includegraphics[width=0.43\textwidth, page=1, trim={0.7cm 5cm 7cm 7cm}, clip = True]{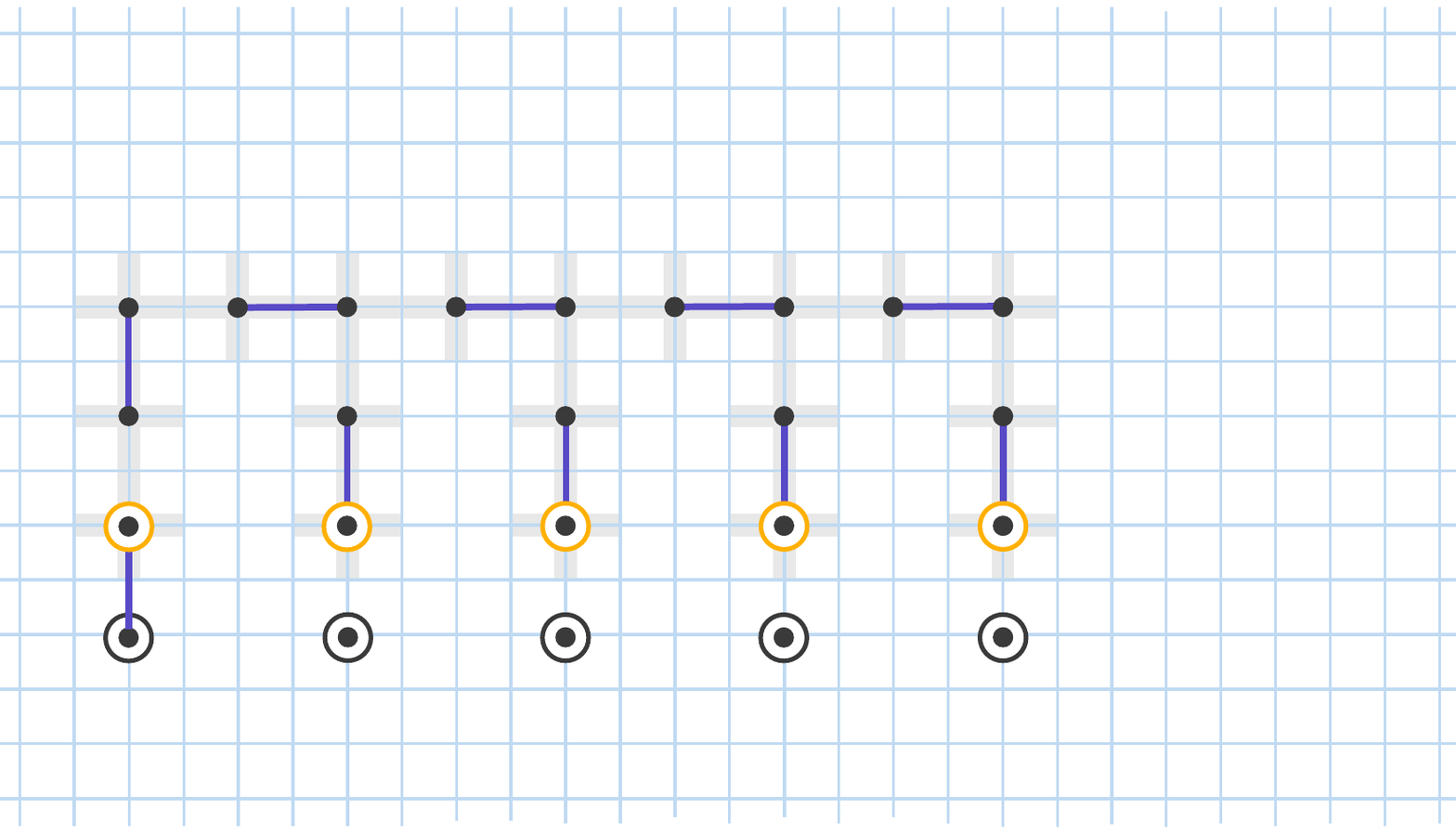}
    (b)\includegraphics[width=0.51\textwidth, page=2, trim={0.7cm 4.2cm 3.5cm 6cm}, clip = True]{Figures/setelementgadgetSC.pdf}
    \caption{The element gadget in the surface code with (a) $Z$ nodes and (b) $Y$ nodes in the surface code. Exactly one of the nodes will be TRUE in a minimal error.}
    \label{fig:setelementSC}
\end{figure}

By placing the gadgets sufficiently far from each other in Fig.~\ref{fig:reductionoverview}(e) and using an $O(s)$-depth sorting network, we satisfy Assumption~\ref{ass:separation} and obtain a syndrome embedded in a surface code with $O(s^2)$ qubits.

\subsection{Transversal CNOT decoding gadgets}
\label{sec:tCNOT}

We now consider decoding of two square-lattice surface code patches coupled via a transversal logical CNOT gate in an intermediate time step. The problem is to find the minimum combined weight of $Z$ qubit errors and measurement bit-flip errors, given a set of triggered $X$ detectors. Away from the time slice $t_G$ of the transversal CNOT gate, the decoding graph is a 3D cubic lattice, so we may use essentially the same gadgets as the ones in Sec.~\ref{sec:SC} with $Z$ defects. The cubic lattice is bipartite, and we will only place nodes on one of the vertex subsets.\footnote{We specify this because the nodes of the wire gadget in Fig.~\ref{fig:WireZgadget}(a)(b) are always in the same vertex subset.} Recall that the defect type in this problem is based on the spacetime region where it is located.

The wire gadget (for any of the three defect types) is analogous to the one in Fig.~\ref{fig:WireZgadget}(a)(b) but placed in three dimensions.
Crossing gadgets are not needed in this setting because wire gadgets can be braided around each other in a 3D geometry.
The element gadget is identical to the one in Fig.~\ref{fig:setelementSC}(a).

The main difference in this decoding scenario is the implementation of the splitting gadget. The splitting gadget may only be placed across the time slice $t_G$, shown in Fig.~\ref{fig:duplicatortCNOT}. This gadget consists of isolated defects, so by Lemma~\ref{lem:isolateddefects}, the error configurations shown are minimal covers and the error excess is zero. Using a parity argument and the fact that the hyperedges in the decoding graph create one defect of each type shows that the nodes in the gadget must all be TRUE or all be FALSE.

\begin{figure}[htpb]
    \centering
    (a)\includegraphics[width=0.45\textwidth, page=1,trim = {1cm 3cm 5cm 2.5cm},clip=true]{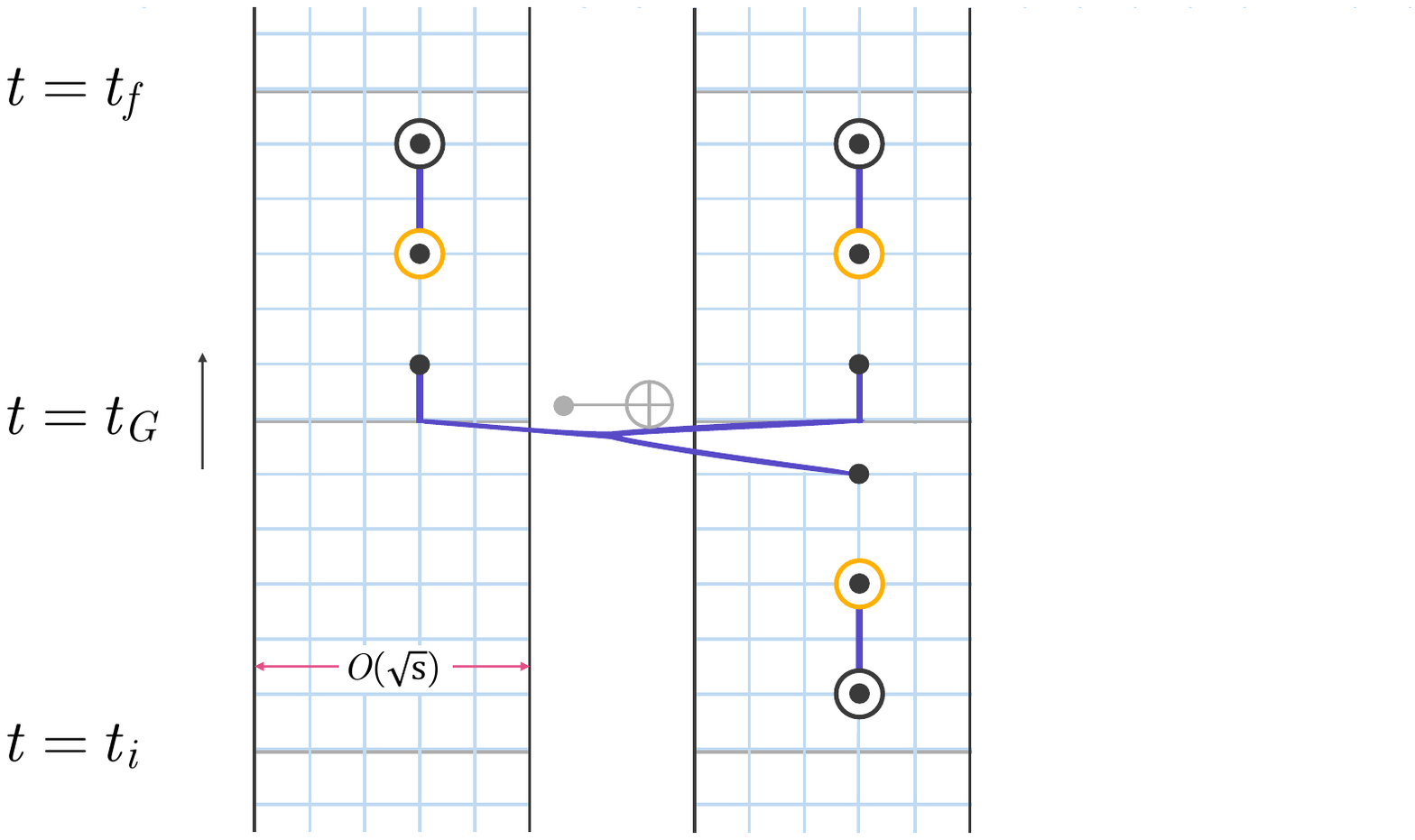}
    (b)\includegraphics[width=0.45\textwidth, page=2,trim = {1cm 3cm 5cm 2.5cm},clip=true]{Figures/spacetimegadget2.pdf}
    \caption{The splitting gadget in the \tcnot decoding problem. A (1+1)D projection of the (2+1)D surface code is illustrated, with time going upward. The hyperedge going across the two surface codes arises from a single measurement error (other possible measurement errors not shown), and all three wires are at the same spatial locations of their respective surface codes.
    Minimal covers are shown with (a) all nodes TRUE and (b) all nodes FALSE.
    }
    \label{fig:duplicatortCNOT}
\end{figure}

The placement of the gadgets in the \tcnot problem, shown in Fig.~\ref{fig:reductionoverview}(f), is slightly different than in the case of 2D decoding graphs for \cc and \surfc. We place all the element gadgets for the elements in $A$ at the time slice $t_i$ in the spacetime region hosting $A$ defects, and similarly for the elements in $B$ and $C$ but at time slice $t_f$.\footnote{If perfect measurements at times $t_i-\frac{1}{2}$ and $t_f+\frac{1}{2}$ are not assumed, we could instead place the element gadgets further in the spacetime bulk, e.g., at time slices $(t_i+t_G)/2$ and $(t_G+t_f)/2$.} All the splitting gadgets for the hyperedges $t\in T$ are on the time slice of the transversal CNOT gate. We then use wire gadgets to connect the element gadgets to the appropriate splitting gadgets. In other words, the three wires to be merged are routed to same spatial locations of each region near the transversal CNOT time slice. For the gadgets placed according to a size-$s$ \tdm problem to be well separated and satisfy Assumption~\ref{ass:separation}, the element gadgets and splitting gadgets should occupy a spatial region of size $O(\sqrt s)\times O(\sqrt s)$. The wires need to be routed while maintaining separation. One way to do this is to use a 2D sorting network, which can implement an arbitrary permutation in depth $O(\sqrt s)$ using nearest-neighbor swaps~\cite{TK77sorting}. The element gadgets are therefore placed at temporal distance $t_f-t_G=t_G-t_i=O(\sqrt s)$ from the time slice of the transversal CNOT gate.
Thus, the decoding problem has polynomial size, as the syndrome can be placed in a spacetime region with $O(s^{3/2})$ error locations.

\section{Hardness of decoding a logical operator}
In the reduction used to prove Theorem~\ref{thm:main}, the syndrome to be decoded is placed in the bulk of a large lattice. Even in the case that no perfect matching exists in the \tdm instance, the minimum-weight error would likely also be supported in the bulk of the lattice. Then, there would be no difference between the logical effect of the correction compared to the case where a perfect matching exists, and one might hope that it is efficient to decode a logical operator. However, we show that determining the logical effect of applying a minimum-weight correction is also NP-hard.

\begin{theorem}
    \label{thm:logical}
    In each of the three decoding scenarios, the problem of deciding if a minimum-weight error anticommutes with a given representative of a logical operator is NP-hard.
\end{theorem}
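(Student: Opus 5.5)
We reduce from \tdm again, reusing the gadget construction behind Theorem~\ref{thm:main}. The new ingredient is a ``logical switch'': a component whose minimum-weight cover has a different logical effect depending on whether a perfect matching exists. The plan is to add one extra gadget, a long wire that can be covered either by an error string running through the bulk or by one that terminates on the lattice boundary. We then tune weights so that the boundary-terminating (logically nontrivial) choice is minimal exactly when no perfect matching exists.

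\textbf{Construction.} Start from the syndrome $\sigma$ of the reduction for Theorem~\ref{thm:main}, placed near a boundary of the lattice rather than deep in the bulk. For YES instances, minimum-weight errors have weight exactly $|\sigma|+m$. For NO instances, every error has weight at least $|\sigma|+m+1$. This follows from Lemma~\ref{lem:errorweightlowerbound} and integrality once we note that, by the proof of Theorem~\ref{thm:main}, weight $|\sigma|+m$ forces a perfect matching.

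We then modify one element gadget, say $g_a$, by giving it an extra ``escape'' option. We add one node whose partner is a wire gadget leading to the boundary of the appropriate type, with the wire running so that its string crosses a fixed logical representative $\bar L$ an odd number of times. Setting this escape node TRUE costs exactly one extra unit of error excess. For example, the wire can end in a defect a distance $d+1$ from a boundary where a distance-$d$ string is allowed, or we can use a gadget with excess $+1$ such as a $Y$--$Z$ crossing-type piece. Setting the escape node TRUE also releases $g_a$ from needing any matching node.

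\textbf{Effect of the modification.} Two configurations now compete.
\begin{itemize}
\item[(i)] The ``matched'' configuration has weight $|\sigma'|+m'$ and requires a perfect matching.
\item[(ii)] The ``escape'' configuration has weight $|\sigma'|+m'+1$. It is always available: $a$ is routed to the boundary, and the remaining $B,C$ parity is absorbed by adding one dummy hyperedge-like splitting gadget also tied to the boundary.
\end{itemize}
If a perfect matching exists, every minimum-weight error is of type (i) and commutes with $\bar L$. Otherwise, every minimum-weight error has weight $|\sigma'|+m'+1$. We then need to show that all such errors are of escape type, and hence anticommute with $\bar L$.

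\textbf{Main obstacle.} The hard step is the second half of that last claim: showing that in a NO instance no non-escape error of weight $|\sigma'|+m'+1$ exists. Otherwise a cheap ``one defect of excess'' error could sit in the bulk with trivial logical effect.

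To handle this, we would scale the excess gap. We replace each gadget by a $k$-fold amplified version, so that any violation of a gadget's minimal-cover rule costs at least $k$. One way is to duplicate the entire construction $k$ times with disjoint neighborhoods, sharing only the single escape route. Alternatively, we give the escape option cost $1$ while every other defect from minimality is forced by Lemma~\ref{lem:errorweightlowerbound}'s slack to cost at least $2$. For the latter, we would use the structure of Eq.~\eqref{eq:errorwtbound}: any error with $E\not\subseteq R$, or with a non-minimal connected cover, pays at least the distance from $R$, which by Assumption~\ref{ass:separation} can be made $\ge 2$ by enlarging separations.

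The same template applies to \cc, \surfc, and \tcnot. For \tcnot the escape wire runs to a spatial boundary of one patch, so that it flips the corresponding logical $\bar X$ or $\bar Z$.
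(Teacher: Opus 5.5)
\textbf{Gap: the escape configuration (ii) is not always available.} Routing only $g_a$ to the boundary releases only the element $a$. Every other element gadget still has exactly one TRUE node in a minimal cover, so you still need TRUE splitting gadgets that cover $A\setminus\{a\}$, $B$ and $C$ each exactly once. The counts already clash ($r-1$ versus $r$), and one dummy splitting gadget cannot repair this. If its $B$ and $C$ nodes go to the boundary, then $B$ and $C$ are not released at all. If it is wired to particular $b_0,c_0$, then (ii) exists only if $A\setminus\{a\}$, $B\setminus\{b_0\}$, $C\setminus\{c_0\}$ admit a perfect matching in $T$. That is just another \tdm instance, and it can itself be a NO instance. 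So in a NO instance the minimum weight may exceed $|\sigma'|+m'+1$. It may then be attained by some configuration with violated gadgets whose logical effect you do not control, and the reduction says nothing there.

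Your amplification ideas cannot fix this, because they only penalize cheating relative to a fallback that need not exist. They are also not sound as stated. Enlarging separations penalizes errors that leave $R$, but your argument does nothing to rule out a non-minimal connected cover lying inside a single $R_g$ with excess only one.

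\textbf{Missing idea: a global escape.} You need a fallback that releases \emph{every} element simultaneously, at total extra cost exactly one. The paper does this as follows.
\begin{itemize}
\item It adds one extra node to every element gadget.
\item It pairs the extra nodes of the $A$ gadgets among themselves by wires, and likewise for $B$ (taking $r$ even without loss of generality).
\item It feeds all $r$ extra $C$ nodes into an $(r+2)$-node multi-wire splitting gadget.
\item The remaining two nodes of that gadget attach to a special wire gadget built from an odd-weight string operator $Q$. Its TRUE cover uses exactly one more error outside $R$ than its FALSE cover, and has the opposite commutation with $P$.
\end{itemize}
Now set all original element nodes FALSE, all extra nodes TRUE, and all hyperedge splitting gadgets FALSE. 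This is consistent for \emph{every} instance, has weight $|\sigma|+m+1$, and flips $P$.

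Conversely, weight $|\sigma|+m$ forces the special wire to be FALSE, hence all $C$ extras FALSE. Then $r$ distinct splitting gadgets are TRUE, which forces every $A$ and $B$ gadget to use an original node, and this gives a perfect matching.

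Note also that the paper does not attempt your amplification step. In the NO case it only exhibits an anticommuting error of minimum weight $|\sigma|+m+1$, while in the YES case every minimum-weight error has the special wire FALSE.
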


The idea of the proof is to use a modified syndrome pattern in the reduction from \tdm. For the modified syndrome, the minimum-weight error will be similar to the error for the original syndrome when a perfect matching exists. However, the modified syndrome allows us to control what a minimum-weight error looks like when no perfect matching exists. In particular, the minimum weight of a valid error exceeds the minimum weight when there is a perfect matching by one, and that error will have the opposite effect on a chosen logical operator.

\begin{figure}[htpb]
    \centering
    (a)\includegraphics[width=0.4\textwidth, page=1, trim = {4cm 2.5cm 10cm 3cm},clip=true]{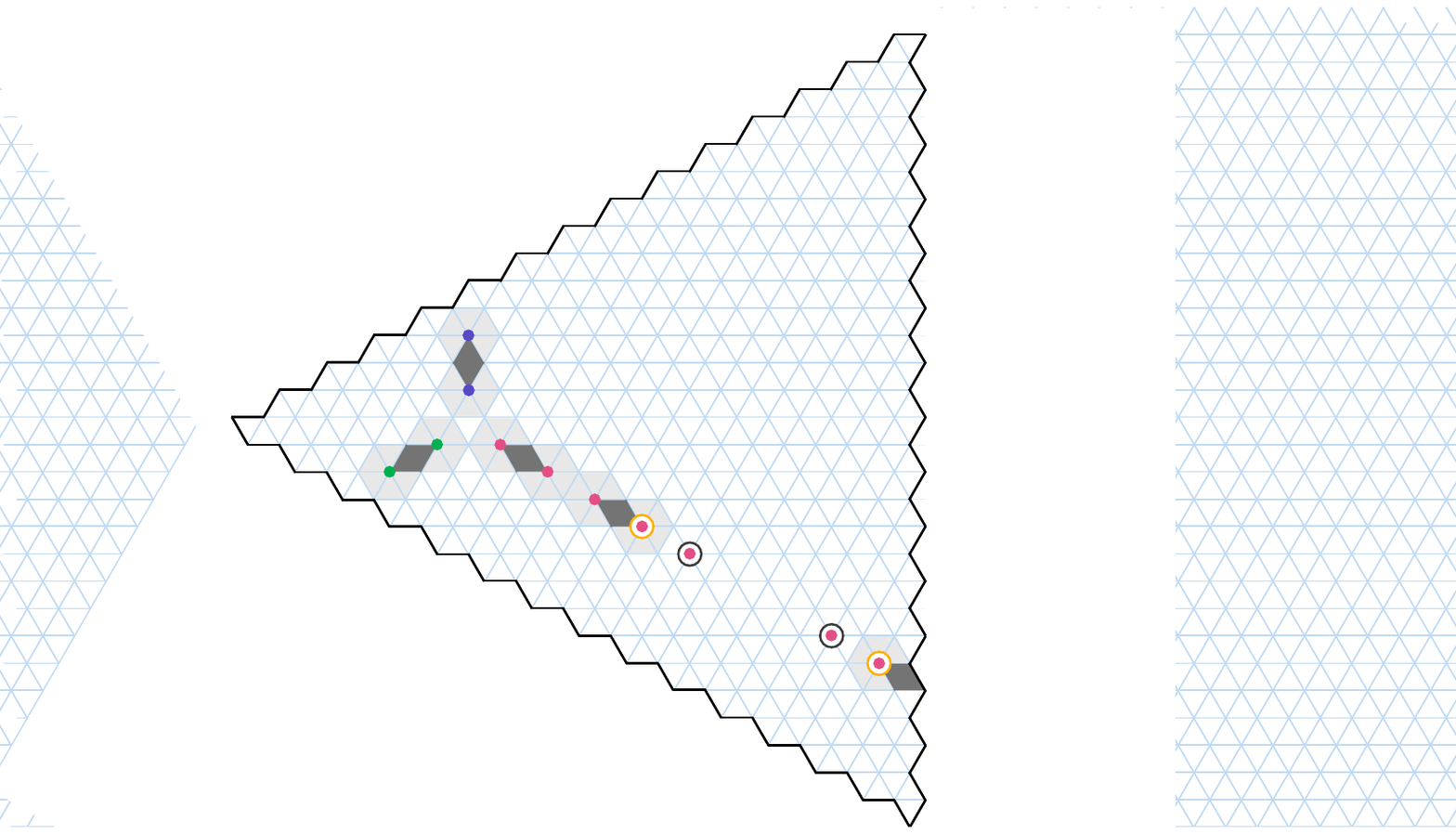}
    (b)\includegraphics[width=0.4\textwidth, page=2, trim = {4cm 2.5cm 10cm 3cm},clip=true]{Figures/CCspecialwire.pdf}
    (c)\includegraphics[width=0.4\textwidth, trim = {1.38cm 6.2cm 0.27cm 10.4cm},clip=true]{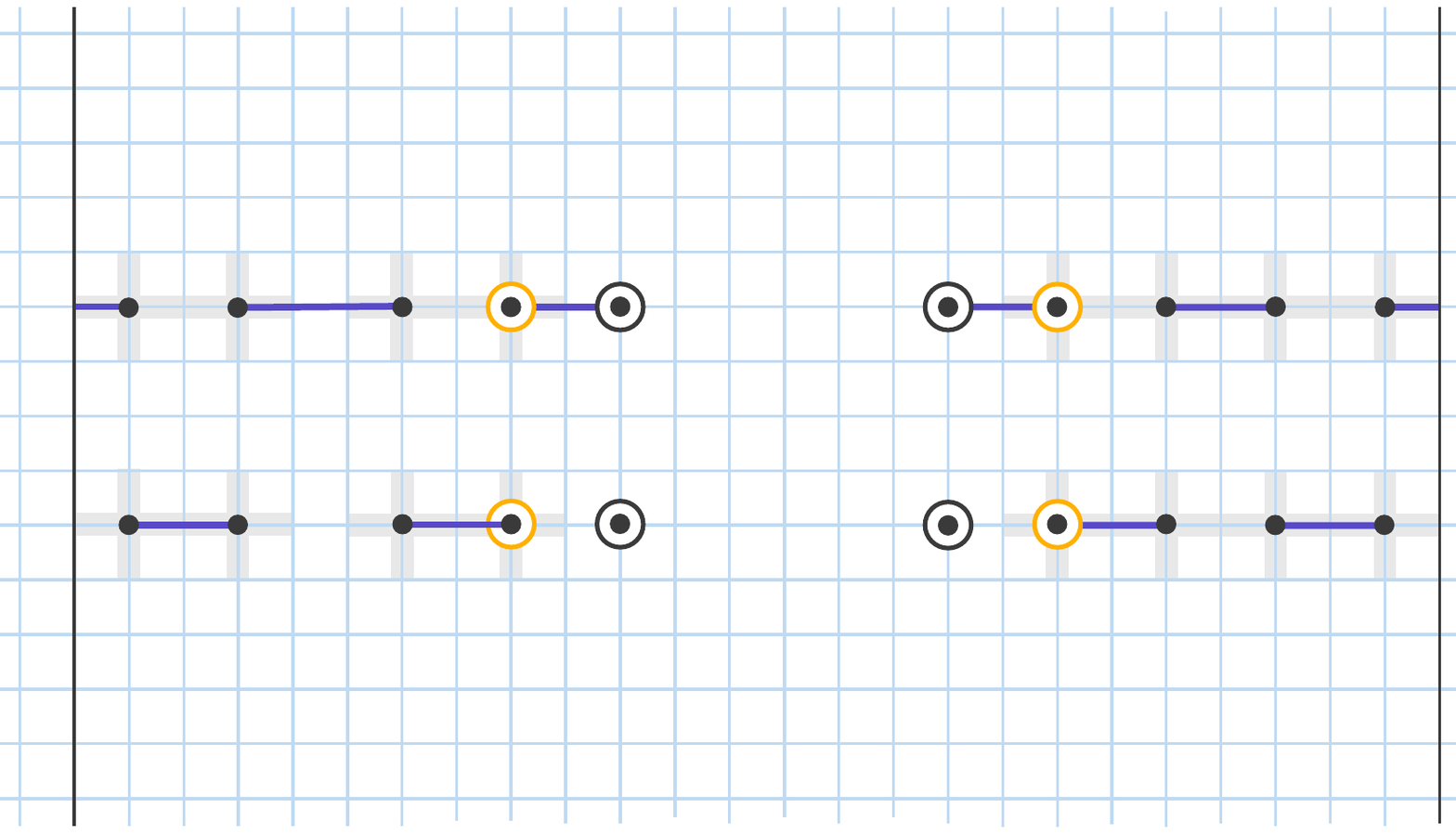}
    (d)\includegraphics[width=0.4\textwidth, trim = {1.38cm 10.4cm 0.27cm 6.2cm},clip=true]{Figures/logicalwireb2.pdf}
    (e)\includegraphics[width=0.4\textwidth, trim = {0cm 7cm 0cm 7cm},clip=true]{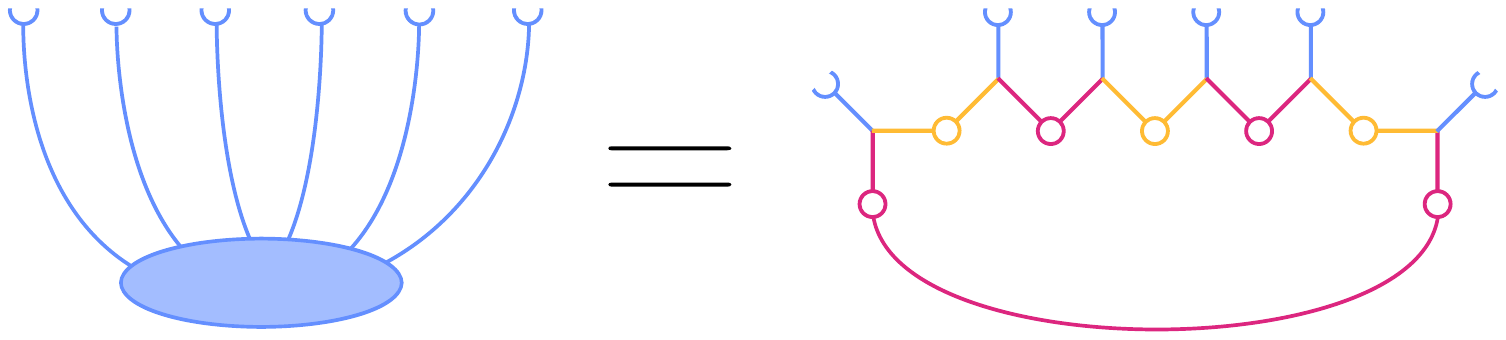}
    (f)\includegraphics[width=0.7\textwidth]{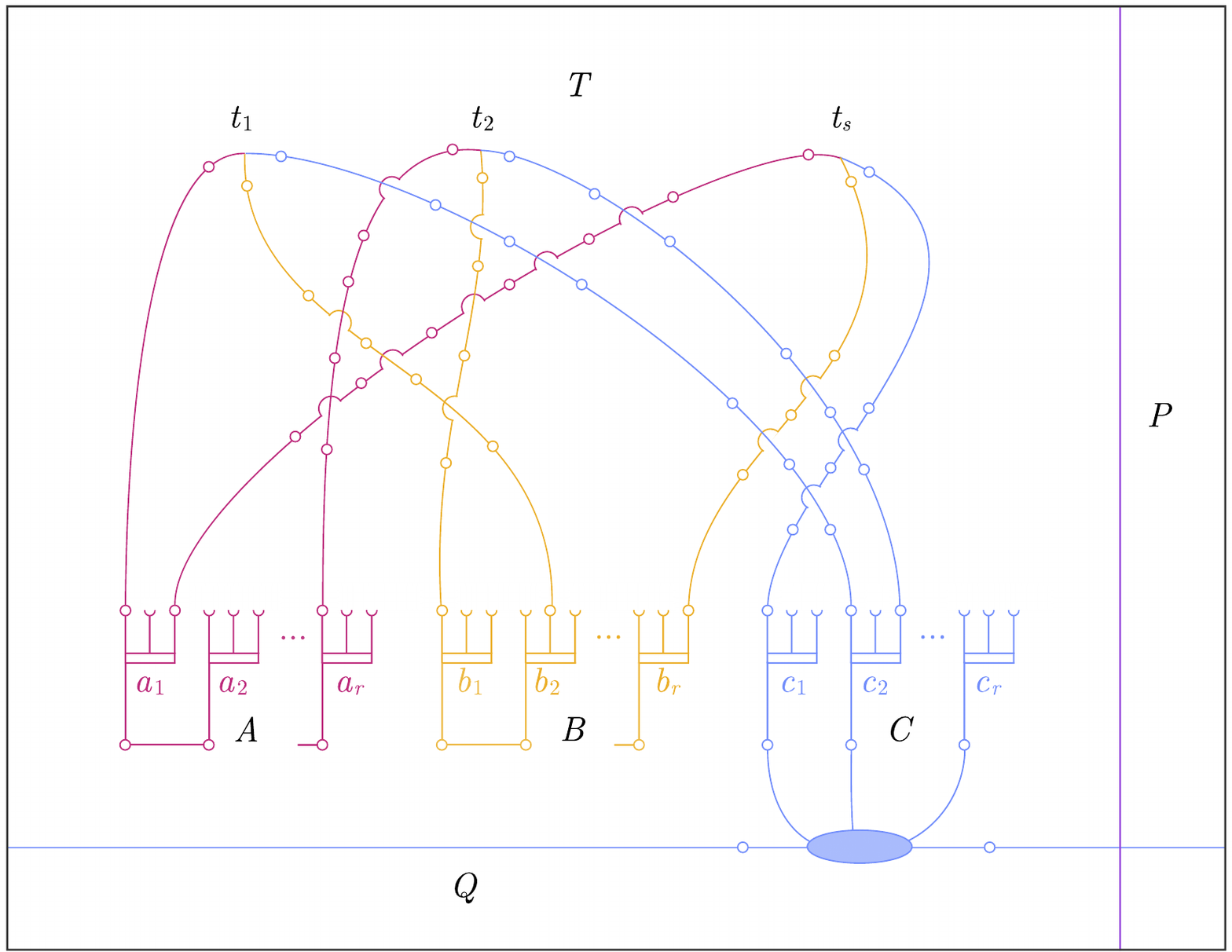}
    \caption{The special wire gadget in (a)(b) the color code and (c)(d) the surface code. When both nodes are set to FALSE, as in (a)(c), the error configuration contains one fewer error outside $R$ compared to when both nodes are set to TRUE, as in (b)(d).
    Not all vertices on the boundaries of the color code support stabilizers.
    (e) The multi-wire splitting gadget, implemented using splitting gadgets.
    (f) The syndrome to be decoded when reducing \tdm to the problem of determining if the minimum-weight error flips a given logical operator (purple vertical line). 
    The long, blue horizontal line is the special wire gadget.
    }
    \label{fig:specialwiregadget}
\end{figure}

\begin{proof}[Proof of Theorem~\ref{thm:logical}]
    We consider decoding a QEC code with an odd-weight string operator $Q$. In the color code, this can be achieved using an odd-distance triangular color code with open boundary conditions. In the surface code (with or without measurement errors), we can use an odd distance code with either open or periodic boundary conditions. We will decode the logical operator $P$ (the purple line in Fig~\ref{fig:specialwiregadget}(f)), which intersects $Q$ and is sufficiently far from the original gadgets.

    From the operator $Q$, we create a special wire gadget $g$.
    For this gadget, there are two error configurations $E_g, E_g'$ whose syndrome is contained in $g$ and its partner nodes that minimize $|E\cap R_g|$. These configurations have both nodes set to FALSE or both nodes set to TRUE, respectively.
    However, $E_g'$ contains one more error outside of $R$ (the neighborhoods of all gadgets) than $E_g$, which would result in a higher-weight global error if used.
    Such an operator for the color code was presented in Ref.~\cite{walters2026CCNPhard} and is shown in Fig.~\ref{fig:specialwiregadget}(a)(b).
    The special wire gadget for the surface code (in the \tcnot decoding scenario, inserted at a constant time slice) is shown in Fig.~\ref{fig:specialwiregadget}(c)(d).
    The two error configurations have opposite effects on the logical operator $P$.

    We also introduce a multi-wire splitting gadget (called the variable gadget in Ref.~\cite{walters2026CCNPhard}). This is a gadget with $r+2$ nodes, all of the same type, that have the same truth value in any minimal cover. Note that $r$ must be even. It can be implemented with the usual splitting gadgets, as shown in Fig.~\ref{fig:specialwiregadget}(e). For the \tcnot decoding scenario, this gadget may only be placed near the time slice $t_G$ because of the same restriction for the splitting gadgets.

    The new syndrome configuration is shown in Fig.~\ref{fig:specialwiregadget}(f).
    Starting from the original placements of the gadgets, we add one additional node to each element gadget. Within the set $A$, we pair up these new nodes, connecting them via wire gadgets, and repeat for the set $B$. Note that we can assume without loss of generality that $r=|A|=|B|=|C|$ is even in the \tdm instance; if not, adding extra elements $a_0\in A$, $b_0\in B$, and $c_0\in C$ and a single extra hyperedge $(a_0, b_0, c_0)\in T$ would give an equivalent \tdm instance. The extra nodes for the set $C$ are connected via wire gadgets to the multi-wire splitting gadget. This requires routing the wires from time slice $t_f$ to time slice $t_G$ in the \tcnot decoding scenario. The other two nodes of the multi-wire splitting gadget are connected to the special wire gadget, shown as the blue horizontal line.

    Consider the case where the \tdm instance has a perfect matching. Then we may use the same minimal cover as in the decision problem. All the extra nodes of the element gadgets are set to FALSE, as well as the nodes of the multi-wire splitting gadget and the special wire gadget. This gives a globally minimal error configuration with $|E|=|\sigma|+m$. Here, $m=\sum_g m_g+1$ for the color code and $m=\sum_g m_g$ for the surface code, which result from the special wire gadget and a modification of Eq.~\eqref{eq:errorwtbound}.

    On the other hand, if an error $E$ with weight $|E|=|\sigma|+m$ existed, the nodes of the special wire gadget, and hence those of the multi-wire splitting gadget and the extra nodes of the element gadgets for the set $C$, must be FALSE. The TRUE nodes of those element gadgets would connect to $r$ splitting gadgets, which would then connect to each of the elements in $A$ and $B$ exactly once, giving a perfect matching of the \tdm instance.
    
    However, in the case that the \tdm instance does not have a perfect matching, we can find an error $E$ of weight $|E|=|\sigma|+m+1$.
    To do this, we set all of the original nodes of the element gadget to FALSE and the new nodes to TRUE. This allows a minimal cover of all the splitting gadgets associated with the hyperedges in $T$, as they are all set to FALSE. The nodes of the multi-wire splitting gadget and the special wire gadget are then set to TRUE, which gives an error configuration with one extra weight compared to if they had been set to FALSE. This shows that whether the minimum-weight error flips the logical operator $P$ corresponds to whether the \tdm instance has a perfect matching.
\end{proof}

\section{Discussion}

In this work, we proved that three quintessential minimum-weight decoding problems, \cc, \surfc, and \tcnot, are NP-hard via a reduction from \tdm.
Our results show that computational intractability already arises in basic and practically relevant decoding problems in QEC, including settings that are central to both quantum memories and logical circuit implementations.
They also highlight a computational complexity separation between minimum-weight decoding and its approximate realizations.

More broadly, our work constitutes an important addition to the growing body of hardness results related to basic tasks in QEC and fault tolerance, such as computing the code distance~\cite{Berlekamp1978,Kapshikar2023}, decoding~\cite{Hsieh2011,Iyer2015,Kuo2020,khesin2025averagecasecomplexityquantumstabilizer}, and implementing or optimizing logical operators~\cite{Herr2017,Bostanci2021,Wasa2023}.
First, it recovers the very recent NP-hardness result of Walters and Turner for minimum-weight decoding of the color code~\cite{walters2026CCNPhard}, arguably through a more natural reduction.
Second, it establishes NP-hardness of minimum-weight decoding of the surface code, the prototypical and most studied QEC code.
This should be contrasted with the earlier result by Fischer and Miyake~\cite{Fischer2024SCNPhard}, which showed computational hardness of most-likely-error decoding for a fine-tuned Pauli noise with spatially inhomogeneous error rates.
Third, it demonstrates that the decoding problem for logical circuits, even in the very basic setting of the surface code with a transversal CNOT gate, is computationally challenging for minimum-weight decoding, motivating further research into efficient decoders for logical circuits~\cite{cain2025fastcorrelateddecoding,SerraPeralta2026}. 

\subsection{Future directions and open problems}
There are several interesting directions for future work. We proved hardness of decoding the color code and surface code on the most commonly used hexagonal and square lattices, respectively.
We expect similar results for other lattices or even non-uniform tilings of the plane, but the gadgets need to be designed carefully to avoid pathological configurations.
One may also consider decoding other code families, such as various quantum low-density parity-check codes~\cite{Breuckmann2021}, although our techniques do not straightforwardly carry over due to the lack of string-like operators.
These codes can be analyzed in the code-capacity setting or phenomenological noise setting when implementing a logical operation.

Given previous hardness results on various maximum-likelihood decoding problem for QEC codes~\cite{Iyer2015,Fischer2024SCNPhard}, one may ask if the degenerate decoding problems in the three scenarios we consider are \#P-hard. Problems in the complexity class \#P require counting the number of solutions to decision problems in NP. Counting the number of perfect matchings in \tdm is \#P-complete~\cite{GareyJohnson90NPcompleteness,Jerrum2003}, which roughly corresponds to finding the number of errors of minimal weight $w$ in our reduction. One way to complete the proof would be to bound the number of higher-weight errors multiplied by their probabilities in each equivalence class.

Another open question concerns approximating the minimum-weight solution to the decoding problems.
We show in Appendix~\ref{app_minwt} that a recovery whose weight is within a constant factor (two or three) of the minimum-weight recovery can be efficiently found.
However, it is unknown if an algorithm exists with approximation factor arbitrarily close to one, i.e., a polynomial-time approximation scheme.
If not, one may also ask if there is some $\varepsilon>0$ such that finding a solution with approximation factor $1+\varepsilon$ is NP-hard.

\begin{acknowledgments}
S.G. and A.K. thank Yue Wu and Cole Maurer for inspiring discussions about hardness of the color code decoding problem.
L.W. is supported by NSERC Fellowship PGSD3-587672-2024 and thanks the Yale Quantum Institute for hosting
her during the project.
S.G. and A.K. acknowledge support from the NSF (QLCI, Award No. OMA-2120757), IARPA and the Army
Research Office (ELQ Program, Cooperative Agreement No. W911NF-23-2-0219).

\end{acknowledgments}

\appendix

\section{Approximations to minimum-weight decoding}
\label{app_minwt}

We show that in each of the three decoding scenarios, an efficient, matching-based algorithm can find an error of weight within a constant factor of the minimum.

\subsection{Color code}
For the color code, we show that the restriction decoder from Ref.~\cite{Kubica2023CCrestrictiondecoder} returns an error of weight at most three times the minimum. Let $\sigma=\sigma_R\sqcup\sigma_G\sqcup\sigma_B$ be an $X$ syndrome decomposed as the three different defect types. For the three different colors $A, B, C$, the restricted lattice $\mc L_{AB}$ is defined by removing the $C$-colored vertices of the triangular lattice $\mc L$ and all adjacent edges.
The decoder first performs minimum-weight perfect matching (MWPM) on the three restricted lattices $\mc L_{AB}$ with syndrome $\sigma_A\sqcup\sigma_B$, obtaining the sets of edges $\gamma_{AB}$.
Without loss of generality, we may assume that $|\gamma_{RG}|\le|\gamma_{RB}|\le |\gamma_{GB}|$.
A local lifting procedure is then applied. In this step, for each red vertex $v$, a minimal subset $\tau_v$ of the adjacent faces is found such that its boundary is equal to $\gamma_{RG}\sqcup\gamma_{RB}$ when restricting to the edges incident to $v$. It was shown in Ref.~\cite{Kubica2023CCrestrictiondecoder} that the $Z$ operator $F$ with support $\bigsqcup_{v \text{ red}} \tau_v$ is a valid correction for the syndrome $\sigma$.

To bound the weight of $F$, let $E$ be a minimum-weight error with syndrome $\sigma$. The $\bb Z_2$-boundary of its support is a set of edges $\tilde\gamma$ of size at most $3|E|$, since each triangle has three edges.
We decompose $\tilde\gamma=\tilde\gamma_{RG}\sqcup\tilde\gamma_{RB}\sqcup\tilde\gamma_{GB}$, where $\tilde\gamma_{AB}$ consists of the edges connecting an $A$ vertex to a $B$ vertex for $A,B\in\{R,G,B\}$ distinct. Each $\tilde\gamma_{AB}$ is a matching of $\sigma_A\sqcup\sigma_B$ on the restricted lattice $\mc L_{AB}$. Since $\gamma_{AB}$ is a minimum-weight matching, we have $|\gamma_{AB}|\le |\tilde\gamma_{AB}|$.
Around every red vertex $v$ in the triangular lattice, the minimal local lift $\tau_v$ has size at most $3/2$ times the number of edges in $\gamma_{RG}\sqcup\gamma_{RB}$ incident to $v$. Summing over all red vertices gives
\begin{equation}
    |F|\le \frac{3}{2}(|\gamma_{RG}|+|\gamma_{RB}|) \le |\gamma_{RG}|+|\gamma_{RB}|+|\gamma_{GB}|\le |\tilde\gamma_{RG}|+|\tilde\gamma_{RB}|+|\tilde\gamma_{GB}|\le 3|E|.
\end{equation}

\subsection{Surface code}
Consider decoding the surface code with syndrome $\sigma$, and let $E=E_XE_Z$ be the minimum-weight error, where $E_X$ and $E_Z$ are the $X$ and $Z$ components of $E$, respectively. By using MWPM on the $X$ defects of $\sigma$, we find a minimum-weight $X$ error $F_X$ giving those defects. Similarly, MWPM gives the minimum-weight $Z$ error $F_Z$ that satisfies the $Z$ defects of $\sigma$. The combined error $F=F_XF_Z$ satisfies the syndrome and has weight
\begin{equation}
    |F| \le |F_X|+|F_Z|\le |E_X|+|E_Z|\le 2|E|.
\end{equation}

\subsection{Transversal CNOT decoding}
Now, let us consider decoding the surface code with transversal logical CNOT applied.
Let $\sigma$ be the syndrome, consisting of the set of triggered $X$ detectors. The syndrome can be expressed as the disjoint union $\sigma=\sigma_1\sqcup \sigma_2$, where $\sigma_1$ and $\sigma_2$ are the triggered detectors associated with the two surface codes.
(The detectors $s^1_{t_{G}-1/2}s^1_{t_G+1/2}s^2_{t_G-1/2}$ are considered to be in the first surface code.)
Let $E=E_1\sqcup E_2$ be the minimum-weight error decomposed as the errors affecting the each surface code.

We consider decoding the two surface codes sequentially, as was done in Refs.~\cite{Beverland21costofuniversality,wan2025iterativetcnotdecoder,Sahay25tCNOT,cain2025fastcorrelateddecoding}. Specifically, we first decode the second surface code using MWPM on the syndrome $\sigma_2$. This gives a set of $Z$ qubit errors and $X$ measurement errors $F_2$ of minimal weight that satisfies the detectors in $\sigma_2$. The error $F_2$ will also trigger certain detectors $\sigma_1^F$ in the first surface code patch if it contains measurement errors at time $t_G-\frac{1}{2}$. We then use MWPM to decode the syndrome $\sigma_1\oplus\sigma_1^F$ (the symmetric difference of $\sigma_1$ and $\sigma_1'$) within the first surface code to obtain a minimum-weight error $F_2$. Because errors on the first surface code do not trigger detectors associated with the second, the combined error $F=F_1\sqcup F_2$ satisfies the syndrome $\sigma$.

Let us now bound the weight of $F$. Because $E_2$ also satisfies the syndrome $\sigma_2$, we have $|F_2|\le |E_2|$.
Let $\sigma_1^E$ be the detectors triggered on the first surface code due to $E_2$, which consists defects at time $t_G$ at the spatial locations of measurement errors in $\sigma_1^E$ at time $t_G-\frac{1}{2}$. The syndrome of $E_1$ is $\sigma_1\oplus \sigma_1^E$.
Now, we define the $E^+$ and $E^-$ as the restrictions of $E_2\oplus F_2$ to the errors at times $t\ge t_G$ and $t\le t_G-\frac{1}{2}$, respectively, but on the first surface code. One of these operators has weight at most $\frac{1}{2}(|E_2|+|F_2|)$ since their union is $E_2\oplus F_2$.
Because $E_2$ and $F_2$ have the same syndrome when restricted to the second surface code and the two surface codes have the same time boundary conditions, the syndromes of $E^+$ and $E^-$ are both equal to $\sigma_1^E \oplus \sigma_1^F$. This means that $E_1 \oplus E^+$ and $E_1 \oplus E^-$ both have the syndrome $\sigma_1\oplus \sigma_1^F$. 
Because $F_1$ is the minimum-weight error with syndrome $\sigma_1\oplus \sigma_1^F$, we obtain
\begin{equation}
    |F_1|\le |E_1|+\frac{1}{2}(|E_2|+|F_2|)\le |E_1|+|E_2|,
\end{equation}
from which it follows that
\begin{equation}
    |F|\le |F_1|+|F_2|\le 2|E_1|+|E_2|\le 2|E|.
\end{equation}

\bibliography{refs}

\end{document}